\long\def\comment#1{}
\newcommand{\nmc}{{\sl NMC}\xspace}
\newcommand{\bss}{{\sl BSS}\xspace}
\newcommand{\bssi}{{\sl BSS-I}\xspace}
\newcommand{\bssii}{{\sl BSS-II}\xspace}
\newcommand{\rss}{{\sl RSS}\xspace}
\newcommand{\rssi}{{\sl RSS-I}\xspace}
\newcommand{\rssii}{{\sl RSS-II}\xspace}
\newcommand{\bfs}{{\sl BFS}\xspace}
\newcommand{\fexa}{{\sl EXACT}\xspace}
\begin{document}

\markboth{R-H. Li et al.}{Estimating Node Influenceability in Social
Networks}

\title{Estimating Node Influenceability in Social Networks}
\author{Rong-Hua Li
\affil{The Chinese University of Hong Kong} Jeffrey Xu Yu \affil{The
Chinese University of Hong Kong} Zechao Shang \affil{The Chinese
University of Hong Kong}}




\newcommand{\stitle}[1]{\vspace{1ex} \noindent{\bf #1}}

\newcommand{\kw}[1]{{\ensuremath {\mathsf{#1}}}\xspace}

\begin{abstract}
Influence analysis is a fundamental problem in social network
analysis and mining. The important applications of the influence
analysis in social network include influence maximization for viral
marketing, finding the most influential nodes, online advertising,
etc. For many of these applications, it is crucial to evaluate the
influenceability of a node. In this paper, we study the problem of
evaluating influenceability of nodes in social network based on the
widely used influence spread model, namely, the independent cascade
model. Since this problem is \#P-complete, most existing work is
based on Naive Monte-Carlo (\nmc) sampling. However, the \nmc
estimator typically results in a large variance, which significantly
reduces its effectiveness. To overcome this problem, we propose two
families of new estimators based on the idea of stratified sampling.
We first present two basic stratified sampling (\bss) estimators,
namely \bssi estimator and \bssii estimator, which partition the
entire population into $2^r$ and $r+1$ strata by choosing $r$ edges
respectively. Second, to further reduce the variance, we find that
both \bssi and \bssii estimators can be recursively performed on
each stratum, thus we propose two recursive stratified sampling
(\rss) estimators, namely \rssi estimator and \rssii estimator.
Theoretically, all of our estimators are shown to be unbiased and
their variances are significantly smaller than the variance of the
\nmc estimator. Finally, our extensive experimental results on both
synthetic and real datasets demonstrate the efficiency and accuracy
of our new estimators.
\end{abstract}



\keywords{Influenceability, influence networks, independent cascade
model, Stratified sampling, uncertain graph}

\acmformat{R-H. Li, Jeffrey. Yu, Z. Shang. Estimating Node
Influenceability in Social Networks.}

\begin{bottomstuff}
The work was supported by grants of the Research Grants Council of
the Hong Kong SAR, China No. 419008 and 419109.

Author's addresses: Rong-Hua Li, Jeffery Xu Yu, and Zechao Shang,
Department of System Engineering and Engineering Management, The
Chinese University of Hong Kong, Sha Tin, N.T., Hong Kong.
\end{bottomstuff}

\maketitle

\section{Introduction}
\label{sec:intro}

Large scale online social networks (OSNs) such as Facebook and
Twitter have become increasingly popular in the last years. Users in
OSNs are able to share thoughts, activities, photos, and other
information with their friends. As a result, the OSNs become an
important medium for information dissemination and influence spread.
A fundamental problem in such OSNs is to analyze and study the
social influence among users \cite{09kddinfluencetangjie}. Important
applications of influence analysis in OSNs include influence
maximization for viral marking \cite{03kddinfluence,10kddinfluence},
finding the most influential nodes
\cite{09bookinfluence,10kddinfluencenode}, online advertising, etc.
Especially, the influence maximization problem has recently
attracted tremendous attention in research community
\cite{07kddoutbreak,09kddinfluence,10kddinfluence,12vldbdatadriveninfluence}.
For many of these applications, a very important step is to
accurately evaluate the influenceability of a node in OSNs.

The influenceability evaluation problem is based on influence spread
in a network. Generally, the influence spread in a network can be
modeled as a stochastic cascade model. In the literature, a widely
used cascade mode is the independent cascade (IC) model. In the IC
model, each node $i$ has a single chance to influence his/her
neighbor $j$ with a probability $p_{ij}$, and such ``influence
event'' is independent of the other ``influence events'' over other
nodes. Due to the independent property, the IC model can be
represented by the probabilistic graph model, where each edge in the
graph is associated with a probability and the existence of an edge
is independent of any other edges \cite{10vldbuncertaingraph}. In
this paper, we focus on the IC model and assume that the influence
probabilities of all the edges in a social network are given in
advance\footnote{Learning the influence probabilities is out of
scope of this paper. In the literature, there are some studies, such
as \cite{10wsdmlearninginfluence}, on learning the influence
probabilities in social network.}. In addition, we use the
probabilistic graph model to represent the IC model.

This problem is equivalent to calculate the expected number of nodes
in $\mathcal{G}$ that are reachable from $s$, which is known to be
\#P-complete \cite{10kddinfluence}. The existing algorithms for this
problem are based on naive Monte-Carlo sampling estimator (\nmc)
\cite{03kddinfluence,05icalpinfluence,09kddinfluence}. However, \nmc
may result in a large variance, which significantly reduces its
effectiveness. We will discuss this issue in detail in
Section~\ref{sec:standmc}.

Given the IC model and a seed node $s$, the influenceability
evaluation problem is to compute the expected influence spread by
the seed node $s$. This problem is equivalent to calculate the
expected number of nodes in a probabilistic graph $\mathcal{G}$ that
are reachable from $s$, which is known to be \#P-complete
\cite{10kddinfluence}. As a result, there is no hope to exactly
evaluate the influenceability in polynomial time unless P=\#P. The
existing algorithm for this problem is based on Naive Monte-Carlo
sampling \cite{03kddinfluence,05icalpinfluence,09kddinfluence}. As
our analysis given in Section~\ref{sec:standmc}, the Naive
Monte-Carlo (\nmc) estimator leads to a large variance, and thus it
significantly reduces the effectiveness of the estimator.
Theoretically, the \nmc estimator can achieve arbitrarily close
approximation to the exact value of the influenceability. However,
this requires a large number of samples. Since performing a
Monte-Carlo estimation needs to flip $m$ coins to determine all the
$m$ edges of the network, the \nmc estimator is extremely expensive
to get a meaningful approximation of the influenceability in large
networks. Consequently, the key issue to accelerate the \nmc
estimator is to reduce the number of samples that are needed to
achieve a good accuracy.

In order to reduce the number of samples used in the \nmc estimator,
one potential solution is to reduce its variance. In this paper, we
propose two types of the Monte-Carlo estimator, namely type-I
estimator and type-II estimator, based on the idea of stratified
sampling. All of our proposed estimators are shown to be unbiased
and their variance are significantly smaller than the variance of
the \nmc estimator. To the best of our knowledge, this is the first
work that addresses and studies the variance problem in \nmc for
influenceability evaluation problem.

To develop new type-I estimators, we devise an exact
divide-and-conquer enumeration algorithm. Our exact algorithm starts
by enumerating $r$ edges, thus resulting in $2^r$ cases. Then, for
each case the algorithm recursively enumerates another $r$ edges.
The recursion will terminate after all the $m$ edges are enumerated.
This exact algorithm has exponential time complexity to evaluate
node's influenceability. Based on the exact algorithm, we propose a
basic stratified sampling (\bss) estimator, namely \bssi estimator,
to estimate a node's influenceability. In particular, we first
select $r$ edges and determine their statuses (existence or
inexistence). Obviously, this process generates $2^r$ cases. Then,
we let each case be a stratum, and draw samples separately from each
stratum. By carefully allocating the sample size for each stratum,
we prove that the variance of the \bssi estimator is smaller than
the variance of the \nmc estimator. Interestingly, we find that our
\bssi estimator can be recursively performed in each stratum, and
thereby we propose a recursive stratified sampling estimator, namely
\rssi estimator. Since the \rssi estimator recursively reduces the
variance in each stratum, its variance is significantly smaller than
the variance of the \bssi estimator. It is important to note that
both \bssi and \rssi estimators have the same time complexity as the
\nmc estimator.

In addition to the type-I estimators (\bssi and \rssi), we further
develop two type-II estimators based on a new stratification method.
The new stratification method partitions the population into $r+1$
strata by picking $r$ edges.  In the first stratum which is denoted
by stratum $0$, we set the statuses of all the $r$ edges to ``$0$'',
which denotes the edge inexistence. In the $i$-th ($1 \leq i \leq
r$) stratum, we set the statuses of all the first $i-1$ edges to
``$0$'', the $i$-th edge to ``$1$'', which signifies the edge
existence, and the rest $r-i$ edges to ``$*$'', which denotes the
status of the edge to be determined. Based on such stratification
approach, we propose a basic stratified sampling estimator, namely
\bssii estimator. Similar to the idea of the \rssi estimator, we
develop a recursive stratified sampling estimator based on \bssii
estimator, namely \rssii estimator. We conduct extensive
experimental studies on both synthetic and real datasets, and we
show that both \rssi and \rssii estimators reduce the variance of
the \nmc estimator significantly.

Note that the stratification approach in both type-I and type-II
estimators are based on the $r$ selected edges. Thus, an
edge-selection strategy may significantly affect the performance of
the estimators. In this paper, we present two edge-selection
strategies for the proposed estimators: random edge-selection and
Breadth-First-Search (\bfs) edge-selection. The random
edge-selection is to pick $r$ unsampled edges randomly for
stratification, while the \bfs edge-selection picks $r$ unsampled
edges according to their \bfs visiting order (the \bfs starts from
the seed node $s$). In our experiments, we show that an estimator
with the \bfs edge-selection strategy significantly outperforms the
same estimator with the random edge-selection strategy.

Besides the influenceability estimation problem in social networks,
our proposed estimation methods can be applied in many other
application domains. For example, consider an application in a
communication network with link failure. Given a router $s$, it
needs to count the expected number of hosts in the network that are
reachable from $s$. Such count assists network resource planing, and
is also useful for network resource estimation, for example in P2P
networks. Our proposed algorithms can provide accurate estimators
for such application domains. In addition, our influenceability
estimation methods can be directly used to the so-called influence
function evaluation problem \cite{03kddinfluence}, in which the seed
is not only one node but a set of nodes. We can solve this problem
by adding a virtual node $s$ and link it to the set of seed nodes.
Finally, our proposed stratified sampling estimators are very
general, and can be easily used to handle uncertain graph mining
problems, such as network reliability estimation
\cite{99bookchnetreliability}, shortest path
\cite{10vldbuncertaingraph}, and reachability computation problem
\cite{11vldbreacheuncertaing}.

The rest of this paper is organized as follows. We give the problem
statement in Section~\ref{sec:problem}, and introduce the Naive
Monte-Carlo estimator in Section~\ref{sec:standmc}. We propose the
type-I and type-II estimators in Section~\ref{sec:newestimatortypeI}
and Section~\ref{sec:newestimatortypeII}, respectively. Extensive
experimental studies are reported in Section \ref{sec:experiments}.
Section \ref{sec:rlwork} discusses the related work and Section
\ref{sec:concl} concludes this work.

\section{Problem Statement}
\label{sec:problem}

We consider a social network $G=(V, E)$, where $V$ denotes a set of
nodes and $E$ denotes a set of directed edges between the nodes. Let
$n=|V|$ and $m=|E|$ be the number of nodes and edges in $G$,
respectively. In a social network, users (nodes) can perform
actions, and the actions can propagate over the network.  For
example, in Twitter, an action denotes a user posts a tweet, and the
action propagation denotes the event that the same tweet is
re-posted (retweeted) by his/her followers. In this paper, we adopt
the independent cascade (IC) model
\cite{03kddinfluence,05icalpinfluence} to model such action
propagation process. In the IC model, every edge $(u, v)$ is
associated with an influence probability $p_{uv}$
(Fig.~\ref{fig:expgraph}), which represents the probability that a
node $v$ performs an action followed by the same action taken by its
adjacent node $u$. We refer to a social network $G$ with influence
probabilities as an influence network denoted by $\mathcal{G}=(V, E,
P)$, where the set $P$ represents the set of influence
probabilities. We call a node an active node if it performs an
action.

The propagation process of the IC model unfolds in discrete steps.
More precisely, we assume that a node $v$ follows a node $u$, and at
step $t$ node $u$ performs an action $\alpha$ and node $v$ does not.
Then, node $u$ is given a single chance to influence node $v$, and
it succeeds with probability $p_{uv}$. This probability is
independent of other nodes that attempt to influence node $v$. If
node $u$ succeeds, then node $v$ will perform action $\alpha$ at
step $t+1$. In other words, node $v$ is influenced by node $u$ at
step $t+1$. It is important to note that whether $u$ succeeds or
not, it cannot make any attempts to influence $v$ again. The process
terminates when there is no new node can be influenced.

The IC model can be initiated by a single node $s$ such that the
node performs an action before any other nodes in $V \backslash
\{s\}$. The seed node $s$ models the source of influence, and it can
spread across the network following the IC model. The propagation
process is a stochastic process, after the process terminates, the
number of active nodes is a random variable. Therefore, we take the
expectation of this random variable to measure the \emph{influence
spread} of $s$, and it is denoted as $F_s(\mathcal{G})$. We refer to
the expected influence spread of $s$ (i.e.\ $F_s(\mathcal{G})$) as
the influenceability of node $s$. In this paper, we aim to evaluate
the influenceability $F_s(\mathcal{G})$ given a seed node $s$. In
the following subsection, we will give a formal definition of
$F_s(\mathcal{G})$ based on the probabilistic graph model
\cite{10vldbuncertaingraph}.

\begin{figure}[t]
\begin{center}
 \subfigure[Influence network $\mathcal{G}$]{
     \label{fig:expgraph}\includegraphics[width=0.45\columnwidth]{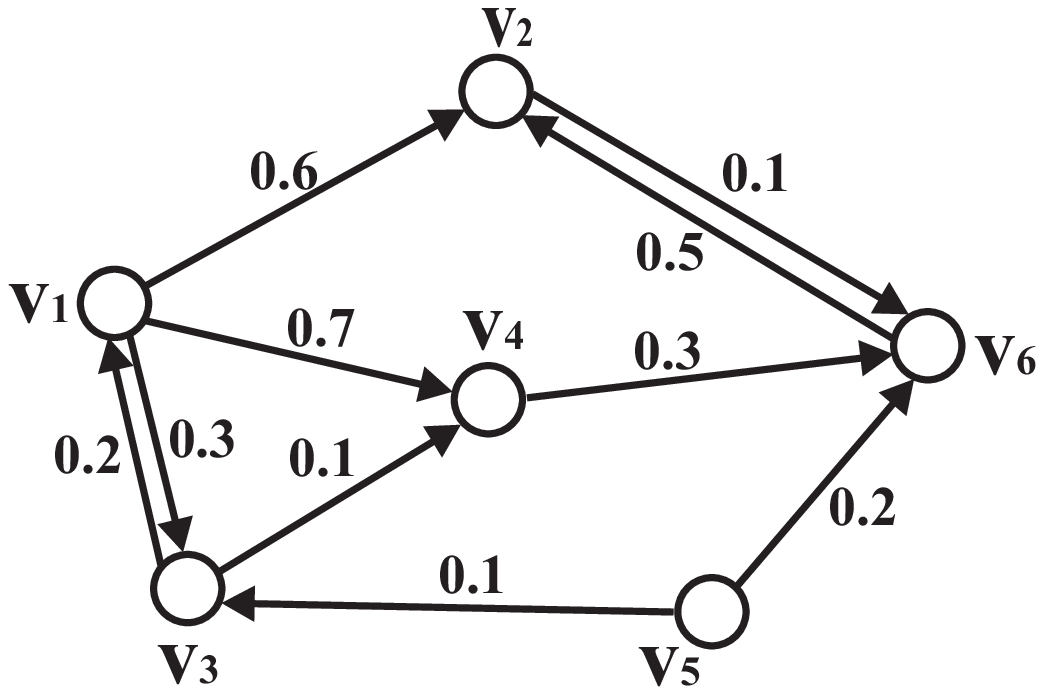}
 }
 \quad
 \subfigure[The source node $s = v_5$]{
     \label{fig:exptargetgraph}\includegraphics[width=0.45\columnwidth]{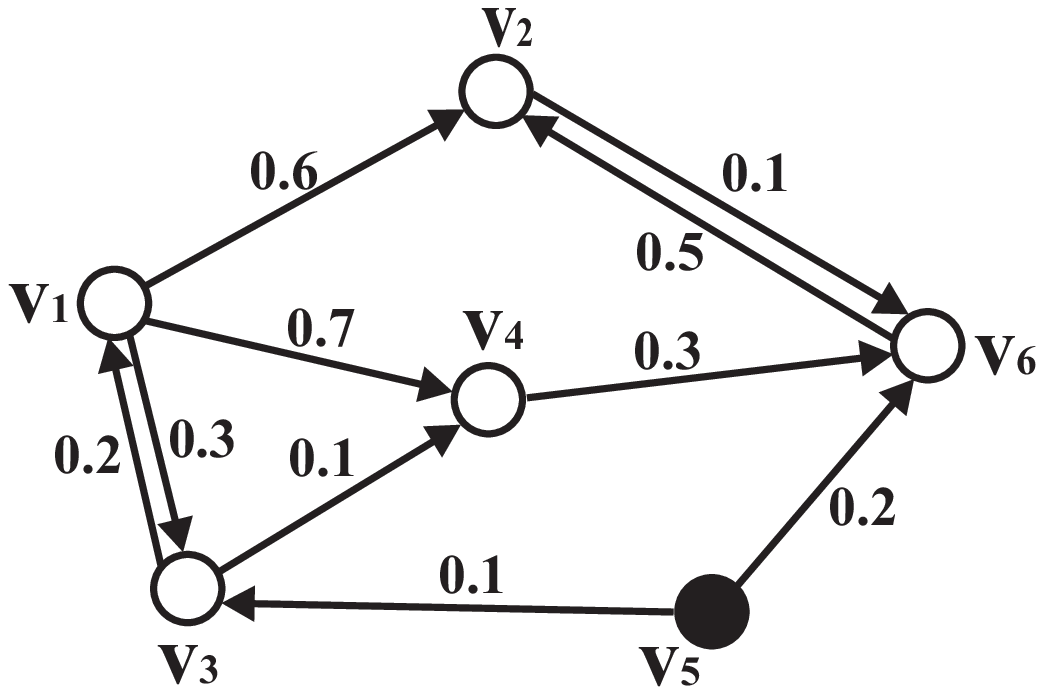}
  }
\\
\subfigure[Possible graph $G_1$ with probability $0.000007056$, and
$f_s(G_1)=3$]{
  \label{fig:exppwgraph1}\includegraphics[width=0.45\columnwidth]{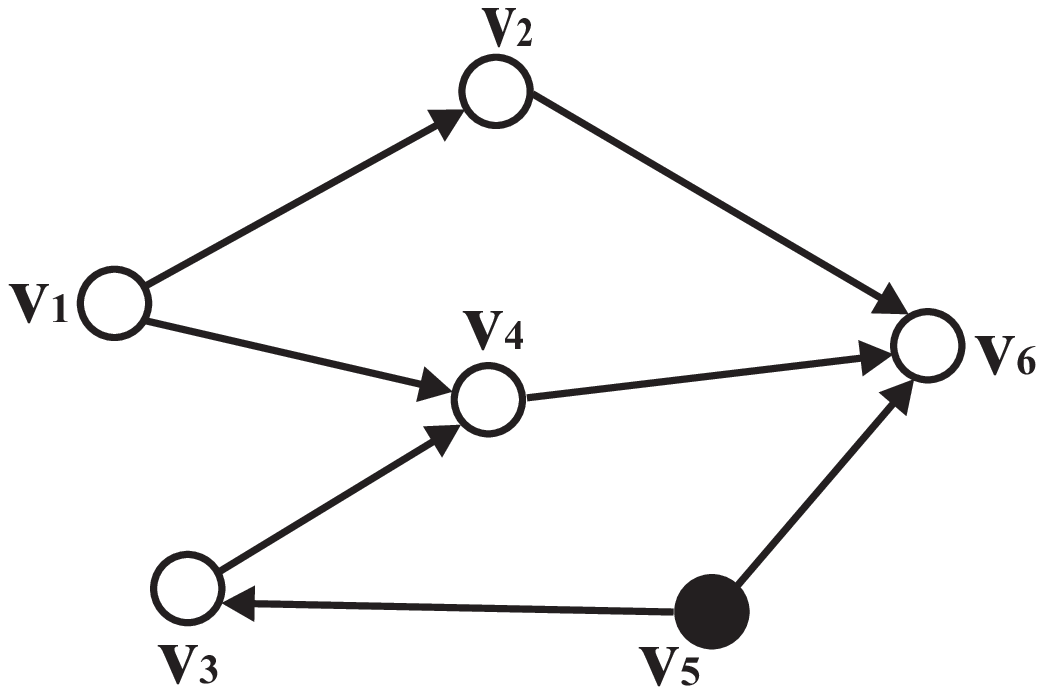}
} \quad \subfigure[Possible graph $G_2$ with probability
$0.00003704$, and
$f_s(G_2)=5$]{\label{fig:exppwgraph2}\includegraphics[width=0.45\columnwidth]{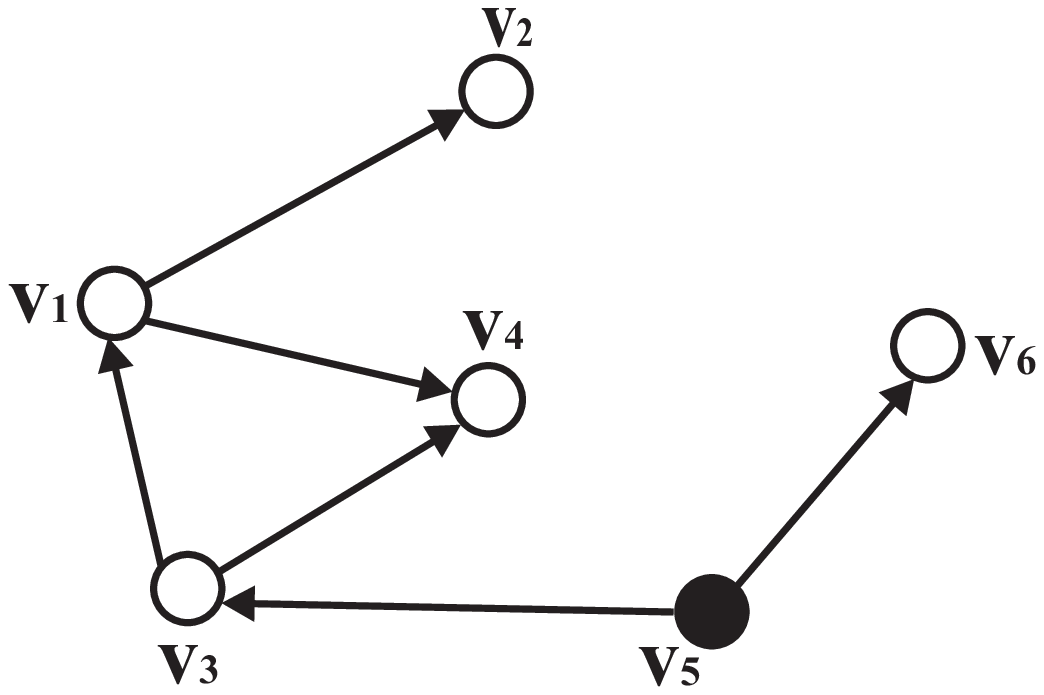}}
\end{center}
\vspace*{-0.8em} \caption[]{A Simple Influence Network}
\label{fig:examples}
\end{figure}

\subsection{Influenceability Evaluation}
\label{subsec:probdef}

Based on the IC model, an influence network $\mathcal{G}=(V, E, P)$
is represented by the probabilistic graph model
\cite{10vldbuncertaingraph}, where the existence of an edge is
independent of any other edges.

Given an influence network $\mathcal{G}=(V, E, P)$, we denote a
possible graph $G_P =(V_P, E_P)$ which can obtained by sampling each
edge $e$ in $\mathcal{G}$ according to the influence probability
$p_e$ associated with the edge $e$ ($p_e \in P$). Here, we have
$V=V_P$, $E_P \subseteq E$, and the possible graph $G_P$ has the
probability $\Pr [G_P]$, which is given by
\begin{equation}
  \label{eq:probpgdef}
\Pr [G_P] = \prod\limits_{e \in E_P } {p_e } \prod\limits_{e \in
E\backslash E_P } {(1 - p_e )}.
\end{equation}
The total number of such possible graphs is $2^m$, where $m$ is the
number of edges in ${\cal G}$. For example, in
Fig.~\ref{fig:expgraph}, the influence network $\mathcal{G}$ has
$2^{10}$ possible graphs and the possible graph $G_1$
(Fig.~\ref{fig:exppwgraph1}) and $G_2$ (Fig.~\ref{fig:exppwgraph2})
have probability $\Pr[G_1]=0.000007056$ and $\Pr[G_2]=0.00003704$,
respectively.

According to the IC model, given a seed node $s$, the
influenceability of $s$, denoted by $F_s(\mathcal{G})$, is the
expected influence spread over all the possible graphs of
$\mathcal{G}$. Therefore, based on the probabilistic graph model,
the influenceability $F_s(\mathcal{G})$ can be given by
\begin{equation}
  \label{eq:influencefundef}
  F_s(\mathcal{G}) = \sum\limits_{G_P \in \Omega } {\Pr [G_P]f_s(G_P)},
\end{equation}
where $\Omega$ denotes the set of all possible graphs of
$\mathcal{G}$, and $f_s(G_P)$ is the number of nodes that are
reachable from the seed node $s$ in the possible graph $G_P$. Note
that $f_s(G_P)$ is a random variable and its expectation is
$F_s(\mathcal{G})$, i.e. $F_s(\mathcal{G}) = \mathbb{E}[f_s(G_P)]$.

As an example, consider the source node $s = v_5$ in the influence
network ${\cal G}$ in Fig.~\ref{fig:expgraph}. $F_s(\mathcal{G})$
can be computed by enumerating all $2^{10}$ possible graphs, $G_P$,
and computing the corresponding $\Pr [G_P]$ and $f_s(G_P)$. For
instance, from Fig.~\ref{fig:exppwgraph1} and
Fig.~\ref{fig:exppwgraph2}, we have $f_s(G_1)$ $= 3$ and
$f_s(G_2)=5$. In this example, the exact $F_s(\mathcal{G})$ is
$0.46123456$.

Equipped with the definition of $F_s(\mathcal{G})$, we describe the
influenceability evaluation problem as follows.

\stitle{Problem Statement}: Given an influence network $\mathcal{G}$
and a seed node $s$, the influenceability evaluation problem is to
compute the influenceability $F_s(\mathcal{G})$
(Eq.~(\ref{eq:influencefundef})).

It is important to note that the influenceability evaluation problem
is known to be \#P-complete \cite{10kddinfluence}, even for the very
special influence network where the influence probabilities of all
edges are equivalent. There is no hope to exactly evaluate the
influenceability in polynomial time unless P = \#P. Given the
hardness of this problem, in this paper, our goal is to develop an
efficient and accurate approximate algorithm to evaluate
$F_s(\mathcal{G})$ given a seed node $s$.

An important metric for evaluating the accuracy of an approximate
algorithm is the mean squared error (MSE), which is denoted by
$\mathbb{E}[(\hat F_s(\mathcal{G}) - F_s(\mathcal{G}))^2]$, where
$\hat F_s(\mathcal{G})$ denotes an estimator of $F_s(\mathcal{G})$
by the approximate algorithm. By the so-called variance-bias
decomposition \cite{11vldbreacheuncertaing}, this metric can be
decomposed into two parts.
\begin{equation}
\label{eq:vbdecomp} \mathbb{E}[(\hat F_s(\mathcal{G}) -
F_s(\mathcal{G}))^2] = Var(\hat F_s(\mathcal{G})) + [\mathbb{E}(\hat
F_s(\mathcal{G}) - F_s(\mathcal{G}))] ^2,
\end{equation}
where $\mathbb{E}(\hat F_s(\mathcal{G}))$ and $Var(\hat
F_s(\mathcal{G}))$ denote the expectation and variance of the
estimator $\hat F_s(\mathcal{G})$, respectively. If an estimator is
unbiased, then the second term in Eq.~(\ref{eq:vbdecomp}) will be
canceled out. Therefore, the variance of the unbiased estimator
becomes the only indicator for evaluating the accuracy of the
estimator.

\section{Naive Monte-Carlo}
\label{sec:standmc}

In this section, we introduce the Naive Monte-Carlo (\nmc) sampling
for estimating the influenceability $F_s(\mathcal{G})$ given a seed
node $s$, which is the only existing algorithm used in the influence
maximization literature
\cite{03kddinfluence,07kddoutbreak,09kddinfluence,10kddinfluence}.
This method first samples $N$ possible graphs $G_1, G_2, \cdots,
G_N$ of $\mathcal{G}$ according to the influence probabilities $P$,
and then calculates the number of reachable nodes from the seed node
$s$ in each possible graph $G_i$, $i=1,2,\cdots, N$, i.e.,\
$f_s(G_i)$. Finally, the \nmc estimator $\hat F_{NMC}$ is given
below.
\begin{equation}
  \label{eq:nmcestimator}
\hat F_{NMC}  = \frac{{\sum\limits_{i = 1}^N {f_s (G_i )} }}{N}.
\end{equation}
The \nmc estimator is an unbiased estimator of $F_s(\mathcal{G})$,
such that $\mathbb{E}(\hat F_{NMC})=F_s(\mathcal{G})$. The variance
of the \nmc estimator is given as follows.
\begin{equation}
  \label{eq:nmcvariance}
  \begin{array}{l}
Var(\hat F_{NMC} ) = \frac{{\mathbb{E}[f_s (G)^2 ] - (\mathbb{E}[f_s
(G)])^2 }}{N} \\ \quad \quad \quad \quad \quad \; \;
\;=\frac{{\sum\limits_{G_P \in \Omega } {\Pr [G_P]f_s (G)^2 } - F_s
(\mathcal{G_P})^2 }}{N}.
\end{array}
\end{equation}
Notice that exactly computing the variance $Var(\hat F_{\nmc} )$ is
extremely expensive, because we have to enumerate all the possible
graphs to determine it, whose time complexity is exponential.  In
practice, we resort to an unbiased estimator of $Var(\hat F_{NMC})$
to evaluate the accuracy of the estimator $\hat F_{NMC}$
\cite{11vldbreacheuncertaing}. In this case, an unbiased estimator
of $Var(\hat F_{NMC} )$ is given by the following equation.

\begin{equation}
\label{eq:nmcvarestimator} \widetilde{Var}(\hat F_{NMC} ) =
\frac{{\sum\limits_{i = 1}^N {(f_s (G_i ) - \hat F_{NMC} )^2 } }}{{N
- 1}} = \frac{{\sum\limits_{i = 1}^N {f_s (G_i )^2 } - N\hat F_{NMC}
^2 }}{{N - 1}}.
\end{equation}

\noindent According to Eq.~(\ref{eq:nmcvarestimator}),
$\widetilde{Var}(\hat F_{NMC} )$ may be very large, because the
value of $f_s(G_i)$ falls into the interval $[0, n-1]$, which may
result in $\widetilde{Var}(\hat F_{NMC} )$ as large as $O(n^2)$.
Here, $n$ is the number of nodes in ${\cal G}$.  For example, assume
$f_s(G_i) = 0$ for $i=1, \cdots, N/2$ and $f_s(G_i) = n-1$ for
$i=N/2+1, \cdots, N$, then $\widetilde{Var}(\hat F_{NMC} )$ equals
to $N(n-1)^2/4(N-1)=O(n^2)$.  Therefore, the key issue that we
address in this paper is to design more accurate estimators than the
\nmc estimator for estimating the influenceability
$F_s(\mathcal{G})$.

The \nmc algorithm is described in Algorithm~\ref{alg:nmc}. The
algorithm works in $N$ iterations (line~2-5). In each iteration, the
\nmc algorithm needs to generate a possible graph by tossing $m$
biased coins for $m$ edges in ${\cal G}$, which takes $O(m)$ time
complexity (line~3).  Then, the algorithm invokes a \bfs algorithm
to calculate the number of reachable nodes from $s$, which again
causes $O(m)$ time complexity (line~4). As a result, the time
complexity of the \nmc algorithm is $O(Nm)$.

\begin{algorithm}[t]
\caption{\nmc($\mathcal{G}$, $N$, $s$)} \label{alg:nmc}
 {
\begin{tabbing}
    {\bf\ Input}: \hspace{0.1cm}\= Influence network $\mathcal{G}$,
               sample size $N$, and the seed node $s$. \\
{\bf\ Output}: The \nmc estimator $\hat{F}_{NMC}$.
\end{tabbing}
\begin{algorithmic}[1]
\STATE $\hat{F}_{NMC} \leftarrow 0$; \FOR {$i = 1$ to $N$} \STATE
Flip $m$ biased coins to generate a possible graph $G_i$; \STATE
Compute $f_s(G_i)$ by the \bfs algorithm; \STATE $\hat{F}_{NMC}
\leftarrow \hat{F}_{NMC} + f_s(G_i)$; \ENDFOR \STATE $\hat{F}_{NMC}
\leftarrow \hat{F}_{NMC}/N$; \STATE \textbf{return} $\hat{F}_{NMC}$;
\end{algorithmic}
}
\end{algorithm}

\comment{
\begin{table}[t]
\begin{center}
{\small
\begin{tabular}{l}
\hline
\textbf{Algorithm 1:} Naive Monte-Carlo Estimator  \\
\quad \quad \quad \quad \quad \; \; \; $\hat F$ = \nmc($\mathcal{G}$, $N$, $s$)\\
Input parameter $\mathcal{G}$: The influence network \\
Input parameter $N$: The sample size \\
Input parameter $s$: The seed node \\
Output parameter $\hat F$: The \nmc estimator \\
\hline
1: Initialize $\hat F = 0$ \\
2: \textbf{for} $i = 1$ to $N$ \textbf{do} \\
3: \quad Flip $m$ biased coins to generate a possible graph $G_i$ \\
4: \quad Compute $f_s(G_i)$ by the \bfs algorithm \\
5: \quad $\hat F$ = $\hat F$ + $f_s(G_i)$ \\
6: \textbf{end for} \\
7: $\hat F$ = $\hat F/N$ \\
8: \textbf{return} $\hat F$ \\
\hline
\end{tabular}
} \label{alg:nmc}
\end{center}
\vspace*{-0.5cm}
\end{table}
}

\section{New Type-I Estimators}
\label{sec:newestimatortypeI}

In this section, we first introduce an exact algorithm for computing
the influenceability $F_s(\mathcal{G})$, which will guide us to
design the new estimators. We will propose two new estimators based
on the idea of stratified sampling \cite{02booksampling}. Both
estimators are shown to be unbiased, and their variance is
significantly smaller than the variance of the \nmc estimator. We
refer to the two estimators as the type-I estimators.


\subsection{An exact algorithm}
\label{subsec:exactalg}

We introduce an exact divide-and-conquer enumeration algorithm to
evaluate the influenceability for a given influence network
$\mathcal{G}=(V, E, P)$ with $n$ nodes and $m$ edges. The main idea
of our exact algorithm is described as follows.
First, the algorithm divides the entire probability space $\Omega$
(all the possible graphs) into $2^r$ different subspaces by randomly
enumerating $r$ ($r < m$) edges that have not been enumerated. Note
that $r$ is a small number (eg. $r=5$). In each subspace, the exact
algorithm recursively enumerates another $r$ edges, and this process
will terminate until all the edges are enumerated. The partition
method of the exact algorithm is described in
Table~\ref{tbl:exactidea}. In Table~\ref{tbl:exactidea}, ``$0$'',
``$1$'', and ``$*$'' denote the statuses of inexistence, existence,
and not-yet-enumerated, for the edges, respectively. Each case from
$1$, $2$, $\cdots$, to $r$ corresponds to a subspace. And $\Omega
_i$, for $i=1, 2, \cdots 2^r$, denotes the probability space of the
case $i$, which represents the set of all possible graphs in the
case $i$.

\begin{table}[t]\tbl{Probability space partition in the exact algorithm
\label{tbl:exactidea}}{
\begin{tabular}{|l | l | c|}
\hline
Edges & $e_1$ $e_2$ $e_3$ $\cdots$ $e_r$ $e_{r+1}$ $\cdots$ $e_m$ & Prob. Space\\
\hline
Case 1 & \;0 \;0 \;0\; $\cdots$ \;0 \;  $ * $ \; $\cdots$ \; $ * $ & $\Omega _1$\\
Case 2 & \;1 \;0 \;0\; $\cdots$ \;0 \;  $ * $ \; $\cdots$ \; $ * $ & $\Omega _2$\\
Case 3 & \;0 \;1  \;0\; $\cdots$ \;0 \;  $ * $ \; $\cdots$ \; $ * $ & $\Omega _3$\\
$\; \; \; \cdots$ & \quad \quad  \quad \; $\cdots$  &  $\cdots$\\
Case $2^r$ & \;1 \;1  \;1\; $\cdots$ \;1 \;  $ * $ \; $\cdots$ \; $ * $ & $\Omega _{2^r}$\\
\hline
\end{tabular}}
\end{table}

To clarify our algorithm, let $T=(e_1, e_2, \cdots, e_r)$ be the set
of selected $r$ edges, and $X_i=(X_{i,1}, X_{i,2}, \cdots, X_{i,r})$
be the status vector corresponding to the selected $r$ edges under
the case $i$, where $X_{i,j}=0$ signifies that the edge $e_j$ does
no exist , and $X_{i,j}=1$ otherwise. For example, for case $1$ in
Table~\ref{tbl:exactidea}, the status vector is $X_1=(0, 0, \cdots,
0)$, which means that all the selected $r$ edges do not exist. In
other words, all the possible graphs in $\Omega _1$ do not include
the edges in $T$. The probability of a possible graph in case $i$ is
given by
\begin{equation}
  \label{eq:piexact}
\pi _i  = \Pr[G_P \in \Omega_i] = \prod\limits_{e_j  \in T \wedge
X_{i,j} = 1} {p_j } \prod\limits_{e_j  \in T \wedge X_{i,j} = 0} {(1
- p_j )}.
\end{equation}
In addition, let $A_1$ be the set of edges that have been
enumerated, and $A_2$ be the set of edges that have not been
enumerated, such that $A_1 \cup A_2 = E$, and $A_1 \cap A_2 =
\emptyset$. Then, the influenceability of the node $s$ under the
case $i$ is defined as
\begin{equation}
  \label{eq:infcasei}
  F_s (\mathcal{G}(A_1 ,A_2 ,X_i )) = \sum\limits_{G_P \in \Omega _i }
  {f_s (G_P)\frac{{\Pr [G_P]}}{{\pi  _i }}},
\end{equation}
where $\mathcal{G}(A_1 ,A_2 ,X_i )$ denotes the set of possible
graphs in the case $i$, i.e. $\Omega_i$. According to
Eq.~(\ref{eq:infcasei}), $F_s (\mathcal{G}(A_1, A_2 ,X_i))$ denotes
the expected spread over all the possible graphs in $\Omega _i$, and
${\Pr [G_P]} / {\pi _i }$ is the probability of a possible graph
$G_P$ conditioning on it exists in $\Omega_i$. It is worth of noting
that $F_s(\mathcal{G}) = F_s(\mathcal{G}(\emptyset, E, \emptyset))$.
Based on Eq.~(\ref{eq:infcasei}), we have the following theorem.

\begin{theorem}
\label{thm:factorthm} Let $F_s (\mathcal{G}(A_1, A_2, X_i))$ be the
influenceability of the node $s$ under the case $i$ as defined in
Eq.~(\ref{eq:infcasei}), and $T$ be a set of $r$ edges randomly
selected from $A_2$. For any $T$, we have $2^r$ cases, and let $Y_j$
($j=1,\cdots, 2^r$) be the corresponding status vector. Then, we
have
  \begin{equation}
    \label{eq:factorthmeq}
    F_s (\mathcal{G}(A_1, A_2, X_i)) = \sum\nolimits_{j = 1}^{2^r}
    {\pi_j F_s(\mathcal{G}(A_1 \cup T, A_2 \backslash T, [X_i,Y_j]))},
  \end{equation}
where $[X_i, Y_j]$ is a new status vector generated by appending
$Y_j$ to $X_i$.
\end{theorem}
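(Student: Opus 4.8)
The plan is to prove Eq.~(\ref{eq:factorthmeq}) by a direct computation: expand the left-hand side using the definition~(\ref{eq:infcasei}), partition the sum over $\Omega_i$ according to the statuses of the $r$ newly selected edges in $T$, and recognize each of the $2^r$ resulting pieces as $\pi_j$ times a conditional influenceability of the form~(\ref{eq:infcasei}) over a refined case. In effect this is a ``law of total expectation'' / tower-property argument: conditioning on $\Omega_i$ and then further conditioning on the realization of the edges in $T$ recovers the finer partition $\{\Omega_i \cap \Omega'_j\}_{j=1}^{2^r}$ of $\Omega_i$, where $\Omega'_j$ is the event that the edges of $T$ take status vector $Y_j$.

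\medskip

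First I would fix notation: write $\Omega_i^{(j)} := \Omega_i \cap \{G_P : \text{edges in } T \text{ realize } Y_j\}$, so that $\Omega_i = \bigsqcup_{j=1}^{2^r} \Omega_i^{(j)}$ is a disjoint union over the $2^r$ status vectors $Y_j$ of $T$; note this set is exactly the case $\mathcal{G}(A_1\cup T,\, A_2\setminus T,\, [X_i,Y_j])$ in the recursion, and its probability normalizer is $\pi_i \cdot \pi_j$ (the product form of $\Pr[\cdot]$ factorizes over disjoint edge sets, since $T\subseteq A_2$ is disjoint from the edges constrained by $X_i$). Then I would start from
\[
F_s(\mathcal{G}(A_1,A_2,X_i)) = \sum_{G_P\in\Omega_i} f_s(G_P)\frac{\Pr[G_P]}{\pi_i}
= \sum_{j=1}^{2^r}\ \sum_{G_P\in\Omega_i^{(j)}} f_s(G_P)\frac{\Pr[G_P]}{\pi_i}.
\]
For the inner sum, multiply and divide by $\pi_j$: $\dfrac{\Pr[G_P]}{\pi_i} = \pi_j\cdot\dfrac{\Pr[G_P]}{\pi_i\pi_j}$, and $\pi_i\pi_j$ is precisely the probability of a possible graph belonging to the refined case $\Omega_i^{(j)}$. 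Hence the inner sum equals $\pi_j \sum_{G_P\in\Omega_i^{(j)}} f_s(G_P)\frac{\Pr[G_P]}{\pi_i\pi_j} = \pi_j\, F_s(\mathcal{G}(A_1\cup T, A_2\setminus T, [X_i,Y_j]))$ by definition~(\ref{eq:infcasei}). Summing over $j$ gives~(\ref{eq:factorthmeq}).

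\medskip

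The main obstacle — really the only nontrivial point — is the factorization claim $\pi_i \pi_j = \Pr[G_P \in \Omega_i^{(j)}]$ and, correspondingly, that the per-graph normalizer for the refined case is $\pi_i\pi_j$. This rests on the independence of edge statuses in the probabilistic graph model together with $T\cap A_1 = \emptyset$ (so the edges newly fixed by $Y_j$ are genuinely ``fresh'' and their probabilities multiply cleanly onto those already fixed by $X_i$, leaving the remaining $*$-edges in $A_2\setminus T$ free). I would make this precise by writing $\Pr[G_P]$ via Eq.~(\ref{eq:probpgdef}) as a product over all $m$ edges, splitting the product into the block of edges in $T$, the block already constrained under case $i$, and the rest, and checking that restricting $G_P$ to $\Omega_i^{(j)}$ pins down exactly the first two blocks with combined weight $\pi_i\pi_j$. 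Everything else is bookkeeping. Finally, specializing $A_1=\emptyset$, $A_2=E$, $X_i=\emptyset$ (the empty status vector) recovers the case $F_s(\mathcal{G}) = \sum_{j=1}^{2^r}\pi_j F_s(\mathcal{G}(T, E\setminus T, Y_j))$ used to launch the exact divide-and-conquer algorithm.
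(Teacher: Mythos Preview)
Your argument is correct: partitioning $\Omega_i$ by the $2^r$ status vectors of $T$, then using independence (and $T\subseteq A_2$ disjoint from the edges already fixed by $X_i$) to factor the normalizer as $\pi_i\pi_j$, is exactly the right mechanism, and the identification of the inner sum with the refined conditional influenceability via Eq.~(\ref{eq:infcasei}) is clean. Note that the paper states Theorem~\ref{thm:factorthm} without proof and proceeds directly to Algorithm~\ref{alg:exact}, so there is no paper proof to compare against; your write-up supplies the missing justification and is the natural one.
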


Based on Theorem~\ref{thm:factorthm}, we develop a recursive
enumeration algorithm described in Algorithm~\ref{alg:exact}.
Algorithm~\ref{alg:exact} first partitions the entire probability
space $\Omega$ into $2^r$ subspaces, and then the same procedure
will be recursively performed on each subspace based on
Theorem~\ref{thm:factorthm} (line 9-17 in Algorithm~\ref{alg:exact}.
The algorithm terminates until all the edges are enumerated. The
influenceability $F_s(\mathcal{G})$ can be computed by invoking
\fexa($\mathcal{G}, \emptyset, E, \emptyset, s$).

\begin{algorithm}[t]
\caption{\fexa($\mathcal{G}$, $A_1$, $A_2$, $X$, $s$)}
\label{alg:exact}
 {
\begin{tabbing}
    {\bf\ Input}: \hspace{0.3cm}\= Influence network $\mathcal{G}$,
        the set of edges that have been \\
    \> enumerated $A_1$, the Set of edges that have not been \\
    \> enumerated $A_2$, sample size $N$, and the seed node $s$. \\
{\bf\ Output}: \> The exact value of $F_s(\mathcal{G})$
\end{tabbing}
\begin{algorithmic}[1]

\IF {$A_2 = \emptyset$}
  \STATE Compute $f_s(G(V, A_1, A_2, X))$ by the \bfs algorithm;
  \STATE {\bf return} $f_s(G(V, A_1, A_2, X))$;
\ELSE
  \IF {$|A_2| < r$}
     \STATE $l \leftarrow |A_2|$;
  \ELSE
     \STATE $l \leftarrow r$;
  \ENDIF
  \STATE Select $l$ edges from $A_2$ randomly;
  \STATE Let $T$ be the set of selected edges;
  \STATE $F \leftarrow 0$;
  \FOR {$i = 1$ to $2^l$}
       \STATE Let $X_i$ be the status vector of set $T$ under the case $i$;
       \STATE Compute $\pi_i $ by Eq.~(\ref{eq:piexact});
       \STATE Append $X_i$ to $X$;
       \STATE $u_i \leftarrow$ \fexa($\mathcal{G}$, $A_1 \cup T$, $A_2
              \backslash T$, $X$, $s$);
       \STATE $F \leftarrow F + \pi_i u_i$;
   \ENDFOR
   \STATE \textbf{return} $F$;
\ENDIF
\end{algorithmic}
}
\end{algorithm}

\comment{
\begin{table}[t]
\begin{center}
\small{
\begin{tabular}{l}
\hline
\textbf{Algorithm 2:} The Exact Algorithm \\
\quad \quad \quad \quad \quad \; \; \; $F$ = \fexa($\mathcal{G}$, $A_1$, $A_2$, $X$, $s$)\\
Input parameter $\mathcal{G}$: The influence network \\
Input parameter $A_1$: Set of edges that have been enumerated \\
Input parameter $A_2$: Set of edges that have not been enumerated \\
Input parameter $X$: The status vector of the enumerated edges \\
Input parameter $s$: The seed node \\
Output parameter $F$: Exact value of $F_s(\mathcal{G})$ \\
\hline
1: \textbf{if} $A_2=\emptyset$ \textbf{then} \\
2: \quad Computing $f_s(G(V, A_1, A_2, X))$ by the \bfs algorithm \\
3: \quad \textbf{return} $f_s(G(V, A_1, A_2, X))$ \\
4: \textbf{else} \\
5: \quad \textbf{if} $|A_2| < r$ \textbf{then} \\
6: \quad \quad $l = |A_2|$ \\
7: \quad \textbf{else} \\
8: \quad \quad $l=r$ \\
9: \quad \textbf{end if} \\
10: \;Selecting $l$ edges from $A_2$ randomly \\
11: \;Let $T$ be the set of selected edges \\
11: \;Initialize $F=0$ \\
12: \;\textbf{for} $i = 1$ to $2^l$ \textbf{do} \\
13: \;\quad Let $X_i$ be the status vector of set $T$ under Case $i$\\
14: \;\quad Computing $\pi  _i $ by Eq.~(\ref{eq:piexact}) \\
15: \;\quad Appending $X_i$ to $X$ \\
16: \;\quad $u_i$ = \fexa($\mathcal{G}$, $A_1 \cup T$, $A_2 \backslash T$, $X$, $s$)\\
17: \;\quad $F = F + \pi_i u_i$ \\
18: \;\textbf{end for} \\
19: \;\textbf{return} $F$\\
20: \textbf{end if} \\
\hline
\end{tabular}
} \label{alg:exactalg}
\end{center}\vspace*{-0.5cm}
\end{table}
}

The enumeration procedure given in Algorithm~\ref{alg:exact} can be
characterized by a full $2^r$-ary tree which is depicted in
Fig.~\ref{fig:enumtree}.  Note that, to simplify our analysis, here
we assume that $r$ is divisible by $m$. In the tree, each node
represents a probability space that consists of a set of possible
graphs.  For example, the root node denotes the probability space
that includes the set of all possible graphs, and each leaf node
denotes the probability space that includes only one possible graph.
Each internal node has $2^r$ children, and each child corresponds to
a case described in Table~\ref{tbl:exactidea}. To compute
$F_s(\mathcal{G})$, we need to traverse all the nodes in the tree.
Because the number of nodes in the tree is $O(2^m)$, the time
complexity of Algorithm~\ref{alg:exact} is $O(2^m)$. Therefore, the
exact algorithm only works on small networks due to the nature of
\#P-complete of the influenceability evaluation problem. In the
following, we will develop two types of efficient approximation
algorithms for evaluating the influenceability.

\begin{figure}[t]
\begin{center}
\includegraphics[width=0.6\hsize]{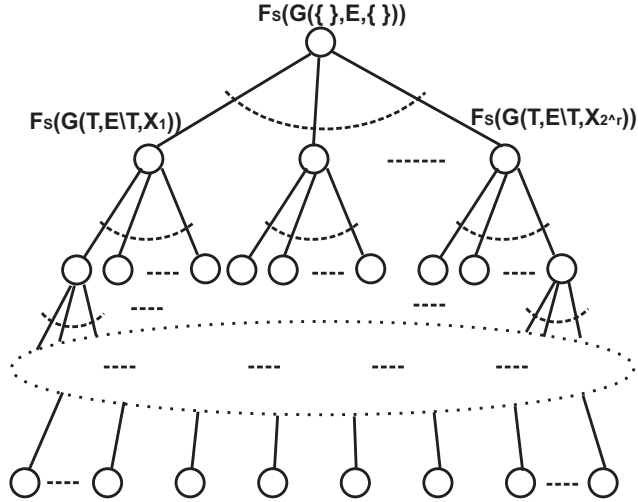}
\caption{The Enumeration Tree of the Exact
Algorithm.} \label{fig:enumtree}
\end{center}
\end{figure}

\subsection{Basic stratified sampling estimator (I)}
\label{subsec:bss}

As discussed in Section \ref{sec:standmc}, the \nmc estimator leads
to a large variance. To reduce the variance, we propose a new
stratified sampling estimator for influenceability evaluation. We
call this new estimator the basic stratified sampling (\bss)
estimator, because it servers as the basis for designing recursive
stratified sampling (\rss) estimator which will be described in
Section \ref{subsec:rss}. To distinguish the type-II estimators
which will be introduced in Section~\ref{sec:newestimatortypeII}, we
refer to the new estimators presented in this section as the type-I
estimators. Specifically, we refer to the type-I \bss and \rss
estimator as the \bssi and \rssi estimator, respectively.

Unlike the \nmc sampler which draws a sample (a possible graph) from
the entire population (all the possible graphs), the stratified
sampling \cite{02booksampling} first divides the population into $M$
disjoint groups, which are called \emph{strata}, and then
independently picks separate samples from these groups. Stratified
sampling is a commonly used technique for reducing variance
\cite{02booksampling} in sampling design. There are two key
techniques in stratified sampling: \emph{stratification}, which is a
process for partitioning the entire population into disjoint strata,
and \emph{sample allocation}, which is a procedure to determine the
sample size that needs to be drawn from each stratum. Below, we will
introduce our stratification and sample allocation method.

\stitle{Stratification}: Our idea of stratification is based on the
exact algorithm described in the previous subsection. First, we
choose $r$ edges and determine their statuses ($0/1$), where $r$ is
a small number. Recall that this process generates $2^r$ various
cases as shown in Table~\ref{tbl:exactidea}, and thereby it
partitions the set of possible graphs $\Omega$ into $2^r$ subsets
$\Omega_1, \cdots, \Omega_{2^r}$. Second, we let each subset be a
stratum. This is because $\Omega_1, \cdots, \Omega_{2^r}$ are
disjoint sets and $\Omega = \bigcup\nolimits_{i = 1}^{2^r } {\Omega
_i}$, thus each case is indeed a valid stratum. It is worth of
mentioning that our stratification process corresponds to the top
two layers in the enumeration tree (Fig.~\ref{fig:enumtree}), the
root node denotes the entire population, and each child represents a
stratum. The stratification process is depicted in
Table~\ref{tbl:statumdesign}.

\begin{table}[t]
\tbl{Stratum design of the \bssi/\rssi estimator
\label{tbl:statumdesign}}{
\begin{tabular}{|l | l | c|}
\hline Edges &  $e_1$ $e_2$ $e_3$ $\cdots$ $e_r$ $e_{r+1}$ $\cdots$
$e_m$ &
Prob. space\\
\hline Stratum 1 & \;0 \;0 \;0\; $\cdots$ \;0 \;  $ * $ \; $\cdots$
\; $ * $
& $\Omega _1$\\
Stratum 2 & \;1 \;0 \;0\; $\cdots$ \;0 \;  $ * $ \; $\cdots$ \; $ *
$
& $\Omega _2$\\
Stratum 3 & \;0 \;1  \;0\; $\cdots$ \;0 \;  $ * $ \; $\cdots$ \; $ *
$ & $\Omega _3$\\
$\; \; \; \cdots$ & \quad \quad  \quad \; $\cdots$  &  $\cdots$ \\
Stratum $2^r$ & \;1 \;1  \;1\; $\cdots$ \;1 \;  $ * $ \; $\cdots$ \;
$
* $ & $\Omega _{2^r}$\\
\hline
\end{tabular}}
\end{table}

In our stratification approach, a question that arises is how to
select the $r$ edges for stratification. As shown in our
experiments, the edge-selection strategy for choosing $r$ edges
significantly affects the performance of the estimator. One
straightforward strategy is to randomly pick $r$ edges from the edge
set $E$. We refer to this edge selection strategy as the random
edge-selection strategy.  With this strategy, the selected $r$ edges
may not have direct contributions for computing the
influenceability. For example, in Fig.~\ref{fig:exptargetgraph}, for
the source node $s = v_5$, assume $r=2$ and the selected edges are
$\{v_1 \to v_2, v_6 \to v_2\}$. The edges $\{v_1 \to v_2, v_6 \to
v_2\}$ have no direct contributions for calculating the
influenceability $F_s(\mathcal{G})$. This may reduce the performance
of the \bssi estimator.
For avoiding such a problem, we introduce another heuristic
edge-selection strategy based on the \bfs visiting order of the
edges. To estimate $F_s(\mathcal{G})$, we first perform a \bfs
algorithm starting from the node $s$ to obtain the first $r$ edges
according to the \bfs visiting order of the edges. Then, we use
these $r$ edges for stratification.  We refer to such edge-selection
strategy as the \bfs edge-selection strategy. Consider the same
example in Fig.~\ref{fig:exptargetgraph}, assume $r = 2$, the first
$r$ edges are $\{v_5 \to v_3, v_5 \to v_6\}$. Then, we partition the
population into 4 strata according to the statuses of these two
edges. Obviously, according to the \bfs edge-selection strategy, the
selected edges have direct contribution to calculate the
influenceability. In our experiments, we find that the performance
of the \bssi estimator with \bfs edge-selection strategy is
significantly better than the performance of the \bssi estimator
with random edge-selection strategy.

\stitle{The \bssi estimator}: Let $N$ be the total number of
samples, $N_i$ be the number of samples drawn from the stratum $i$
($i=1, 2, \cdots, 2^r$), and $G_{i,j}$ ($j=1,2, \cdots, N_i$) be a
possible graph sampled from the stratum $i$. Then, the \bssi
estimator is given as follows.
\begin{equation}
  \label{eq:bssestimator}
\hat F_{BSSI}  = \sum\nolimits_{i = 1}^{2^r } {\pi _i \frac{1}{{N_i
}}\sum\nolimits_{j = 1}^{N_i } {f_s(G_{i,j} )} },
\end{equation}
where $\pi_i$ is defined in Eq.~(\ref{eq:piexact}). The following
theorem shows that $\hat F_{BSSI}$ is an unbiased estimator of the
influenceability $F_s(\mathcal{G})$.

\begin{theorem}
  \label{thm:bssunbiased}
$F_s (\mathcal{G}) = \mathbb{E}(\hat F_{BSSI})$.
\end{theorem}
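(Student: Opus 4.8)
The plan is to compute $\mathbb{E}(\hat F_{BSSI})$ directly by linearity of expectation and then to invoke Theorem~\ref{thm:factorthm} at the root of the enumeration tree. The only genuinely probabilistic ingredient is the distribution of a single sample $G_{i,j}$ drawn from stratum $i$. By the construction of the \bssi sampler, we fix the statuses of the $r$ edges in $T$ according to the status vector $X_i$ and flip independent biased coins for the remaining $m-r$ edges; hence $G_{i,j}$ is a possible graph lying in $\Omega_i$, drawn with the conditional probability $\Pr[G_P]/\pi_i$ for $G_P\in\Omega_i$, where $\pi_i$ is as in Eq.~(\ref{eq:piexact}). Since $\pi_i=\Pr[G_P\in\Omega_i]$ and the coin flips for the edges outside $T$ are independent of those for the edges in $T$, this is just the routine conditional-probability identity $\Pr[G_P\mid G_P\in\Omega_i]=\Pr[G_P]/\pi_i$.

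Consequently, writing $A_1=T$ and $A_2=E\backslash T$,
\[
\mathbb{E}[f_s(G_{i,j})] \;=\; \sum_{G_P\in\Omega_i} f_s(G_P)\,\frac{\Pr[G_P]}{\pi_i}
\;=\; F_s(\mathcal{G}(A_1,A_2,X_i)),
\]
which is exactly the definition given in Eq.~(\ref{eq:infcasei}). Applying linearity of expectation to the \bssi estimator in Eq.~(\ref{eq:bssestimator}) then yields
\[
\mathbb{E}(\hat F_{BSSI}) \;=\; \sum_{i=1}^{2^r}\pi_i\frac{1}{N_i}\sum_{j=1}^{N_i}\mathbb{E}[f_s(G_{i,j})]
\;=\; \sum_{i=1}^{2^r}\pi_i\frac{1}{N_i}\cdot N_i\cdot F_s(\mathcal{G}(A_1,A_2,X_i))
\;=\; \sum_{i=1}^{2^r}\pi_i F_s(\mathcal{G}(A_1,A_2,X_i)).
\]

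Finally, I would invoke Theorem~\ref{thm:factorthm} with $A_1=\emptyset$, $A_2=E$, and empty status vector $X=\emptyset$, taking $T$ to be the $r$ edges selected for stratification; the status vectors of the strata are precisely the $Y_j$ of the theorem, and $[\emptyset,Y_j]=Y_j$. Since $F_s(\mathcal{G}(\emptyset,E,\emptyset))=F_s(\mathcal{G})$, the theorem gives $\sum_{i=1}^{2^r}\pi_i F_s(\mathcal{G}(T,E\backslash T,Y_i))=F_s(\mathcal{G})$, which together with the previous display proves $\mathbb{E}(\hat F_{BSSI})=F_s(\mathcal{G})$.

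The one step that warrants care is the very first one: pinning down that a sample drawn from stratum $i$ really follows the conditional law $\Pr[G_P]/\pi_i$ used to define $F_s(\mathcal{G}(A_1,A_2,X_i))$ in Eq.~(\ref{eq:infcasei}). Everything after that is linearity of expectation and a single appeal to Theorem~\ref{thm:factorthm}. In particular, independence of the samples $G_{i,j}$ across $i$ and $j$ is \emph{not} needed for unbiasedness — it will only be used later in the variance analysis.
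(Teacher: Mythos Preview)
Your proof is correct and follows the same underlying idea as the paper's: use linearity of expectation, identify the law of $G_{i,j}$ as the conditional distribution $\Pr[G_P]/\pi_i$ on $\Omega_i$, and recombine. The only difference is in the final step. You package the inner expectation as $F_s(\mathcal{G}(T,E\backslash T,X_i))$ and then invoke Theorem~\ref{thm:factorthm} to collapse $\sum_i \pi_i F_s(\mathcal{G}(T,E\backslash T,X_i))$ to $F_s(\mathcal{G})$. The paper instead bypasses Theorem~\ref{thm:factorthm} entirely: after writing $\mathbb{E}[f_s(G_{i,j})]=\sum_{G_P\in\Omega_i} f_s(G_P)\Pr[G_P]/\pi_i$, it simply cancels the $\pi_i$ against the $1/\pi_i$ and uses $\Omega=\bigcup_i\Omega_i$ to obtain $\sum_{G_P\in\Omega}\Pr[G_P]f_s(G_P)=F_s(\mathcal{G})$ in one line. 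Your route is perfectly valid but slightly less self-contained, since Theorem~\ref{thm:factorthm} is itself proved (implicitly) by exactly this cancellation; the paper's version avoids the detour.
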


 \begin{proof}
We prove it by the following equalities.
\begin{equation*}
\begin{array}{l}
 \mathbb{E}(\hat F_{BSSI} ) = \mathbb{E}(\sum\nolimits_{i = 1}^{2^r } {\pi _i
 \frac{1}{{N_i }}\sum\nolimits_{j = 1}^{N_i } {f_s(G_{i,j} )} } ) \\
 \quad \;  \quad \quad \quad \; = \sum\nolimits_{i = 1}^{2^r } {\pi _i \mathbb{E}(f_s(G_{i,j} ))}  \\
  \quad \;  \quad \quad \quad \;  = \sum\nolimits_{i = 1}^{2^r } {\pi
 _i \sum\nolimits_{G_P \in \Omega _i } {f_s (G_P)\frac{{\Pr [G_P]}}{{\pi _i
 }}} }  \\
 \quad \;  \quad \quad \quad \; = \sum\nolimits_{G_P \in \Omega }
 {\Pr [G_P]f_s (G_P)}  \\
\quad \;  \quad \quad \quad \; = F_s (\mathcal{G}) \\
 \end{array}
\end{equation*}
 \end{proof}

Let $\sigma _i$ be the variance of the sample in the stratum $i$.
Since the samples are independently drawn by the basic stratified
sampling algorithm, thus the variance of the \bssi estimator is
given by
\begin{equation}\label{eq:varbss}
Var(\hat F_{BSSI} ) = \sum\nolimits_{i = 1}^{2^r } {\pi _i^2
\frac{{\sigma _i }}{{N_i }}},
\end{equation}
where $\pi_i$ is given in Eq.~(\ref{eq:piexact}).


\stitle{Sample allocation}: As discussed above, the \bssi estimator
is unbiased and the variance of the \bssi estimator depends on the
sample size of all the strata, i.e.,\ $N_i$, for $i = 1, 2, \cdots
2^r$. Thus, a question that arises is how to allocate the sample
size for each stratum $i$ ($i=1, 2, \cdots, 2^r$) to minimize the
variance of the \bssi estimator, i.e.\ $Var(\hat F_{BSSI} )$.
Formally, the sample allocation problem is formulated as follows.
\begin{equation}
\label{eq:optimalsaprob}
\begin{array}{l}
 \min \; Var(\hat F_{BSSI} ) = \sum\nolimits_{i = 1}^{2^r } {\pi _i^2 \frac{{\sigma _i }}{{N_i }}}  \\
 s.t.\quad \quad \sum\nolimits_{i = 1}^{2^r } {N_i }  = N. \\
 \end{array}
 \end{equation}
By applying the Lagrangian method, we can derive the optimal sample
allocation as given by
\begin{equation}
 \label{eq:optallocation}
 N_i  = {N\pi _i \sqrt {\sigma _i } }/{\sum\nolimits_{i = 1}^{2^r }
 {\pi _i \sqrt {\sigma _i}
 }},
 \end{equation}
for $i=1, \cdots, 2^r$. From Eq.~(\ref{eq:optallocation}), the
optimal allocation needs to know the variance of the sample in each
stratum, i.e.\ $\sigma_i$, for $i=1,\cdots, 2^r$. However, such
variances are unavailable in our problem. Interestingly, we find
that, if the sample size of the stratum $i$ is allocated to $\pi_i
N$, then the variance of the \bssi estimator will be smaller than
the variance of the \nmc estimator. We have the following theorem.

 \begin{theorem}
   \label{thm:propallocate}If $N_i = \pi_i N$, then $Var(\hat F_{BSSI}) \le Var(\hat
   F_{\nmc})$.
 \end{theorem}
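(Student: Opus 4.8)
The plan is to reduce the claimed inequality to the classical variance (ANOVA) decomposition. First I would substitute the proportional allocation $N_i=\pi_i N$ into Eq.~(\ref{eq:varbss}). Since $\sum_{i=1}^{2^r}\pi_i = \prod_{j=1}^{r}\big(p_j+(1-p_j)\big)=1$, this yields
\[
Var(\hat F_{BSSI}) \;=\; \sum\nolimits_{i=1}^{2^r}\pi_i^2\,\frac{\sigma_i}{\pi_i N} \;=\; \frac{1}{N}\sum\nolimits_{i=1}^{2^r}\pi_i\sigma_i .
\]
On the other hand, by Eq.~(\ref{eq:nmcvariance}), $Var(\hat F_{NMC}) = \tfrac{1}{N}\big(\sum_{G_P\in\Omega}\Pr[G_P]f_s(G_P)^2 - F_s(\mathcal{G})^2\big) = \tfrac{\sigma^2}{N}$, where $\sigma^2$ denotes the variance of $f_s$ over the whole population $\Omega$. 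Hence it suffices to prove $\sum_{i=1}^{2^r}\pi_i\sigma_i\le\sigma^2$, i.e.\ the $\pi_i$-weighted average of the within-stratum variances never exceeds the overall variance.

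To establish this, I would decompose $\sigma^2$ stratum by stratum. Write $\mu:=F_s(\mathcal{G})$, and let $\mu_i:=\sum_{G_P\in\Omega_i}f_s(G_P)\,\Pr[G_P]/\pi_i$ be the conditional mean of $f_s$ on $\Omega_i$ (this is exactly $F_s(\mathcal{G}(A_1,A_2,X_i))$ of Eq.~(\ref{eq:infcasei}) for the initial call), so that, by Theorem~\ref{thm:factorthm} (or directly, since $\Omega$ is the disjoint union of the $\Omega_i$), $\sum_{i}\pi_i\mu_i=\mu$. Recall also $\sigma_i=\sum_{G_P\in\Omega_i}\frac{\Pr[G_P]}{\pi_i}(f_s(G_P)-\mu_i)^2$ and $\sum_{G_P\in\Omega_i}\Pr[G_P]=\pi_i$. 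Now split
\[
\sigma^2 \;=\; \sum\nolimits_{i=1}^{2^r}\ \sum\nolimits_{G_P\in\Omega_i}\Pr[G_P]\big(f_s(G_P)-\mu\big)^2 ,
\]
write $f_s(G_P)-\mu=(f_s(G_P)-\mu_i)+(\mu_i-\mu)$, and expand the square inside each stratum. The cross term vanishes for every $i$ because $\sum_{G_P\in\Omega_i}\Pr[G_P]\,(f_s(G_P)-\mu_i)=\pi_i(\mu_i-\mu_i)=0$, leaving
\[
\sigma^2 \;=\; \sum\nolimits_{i=1}^{2^r}\pi_i\sigma_i \;+\; \sum\nolimits_{i=1}^{2^r}\pi_i(\mu_i-\mu)^2 .
\]
The second ("between-strata") sum is nonnegative, so $\sum_{i}\pi_i\sigma_i\le\sigma^2$, and combining with the two displayed variance formulas gives $Var(\hat F_{BSSI})\le Var(\hat F_{NMC})$.

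I do not expect a real obstacle here; the only points needing care are (i) checking that the cross terms cancel stratum by stratum, which hinges on $\mu_i$ being precisely the conditional expectation of $f_s$ on $\Omega_i$ under the law $\Pr[G_P]/\pi_i$, and (ii) the mild technicality that $\pi_i N$ is in general not an integer, so strictly speaking the bound holds for the idealized proportional allocation (rounding perturbs it negligibly). If one prefers to avoid conditional-expectation language, the same identity can be written out purely as an algebraic manipulation of the finite sums over $\Omega$ and its partition $\{\Omega_i\}$, which is the form I would actually include in the paper.
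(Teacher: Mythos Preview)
Your proposal is correct and follows essentially the same approach as the paper: both substitute $N_i=\pi_iN$ to get $Var(\hat F_{BSSI})=\frac{1}{N}\sum_i\pi_i\sigma_i$ and then show the gap $Var(\hat F_{NMC})-Var(\hat F_{BSSI})$ equals the between-strata variance $\frac{1}{N}\big(\sum_i\pi_i\mu_i^2-(\sum_i\pi_i\mu_i)^2\big)=\frac{1}{N}\sum_i\pi_i(\mu_i-\mu)^2\ge0$. The only cosmetic difference is that the paper expands $\sigma_i$ via second moments and recognizes the difference as $Var(\mu_i)/N$, whereas you obtain the same identity by the add-and-subtract-$\mu_i$ ANOVA decomposition with explicit cross-term cancellation.
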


\begin{proof}
If $N_i = \pi_i N$, then we have $Var(\hat F_{BSSI} ) =
\sum\nolimits_{i = 1}^{2^r } {\pi _i \frac{{\sigma _i }}{N}} $. Let
$\mu_i = \mathbb{E}(f_s (G_{i,j} ))$ be the expectation of the
sample in the stratum $i$.  By definition, we have $ \sigma _i =
\mathbb{E}(f_s (G_{i,j} )^2 ) - \mu _i ^2 = \sum\nolimits_{G_P \in
\Omega _i } {f_s (G_P)^2 \frac{{\Pr [G_P]}}{{\pi _i }}} - \mu _i ^2
$. Then, we have
 \[
\begin{array}{l}
 Var(\hat F_{BSSI} ) = \frac{1}{N}\sum\nolimits_{i = 1}^{2^r } {\pi _i
  (\sum\nolimits_{G_P \in \Omega _i } {f_s (G_P)^2 \frac{{\Pr [G_P]}}{{\pi
  _i }}}  - \mu _i ^2 )}  \\
  \quad \quad \quad \quad \quad \;\;\; = \frac{1}{N}\sum\nolimits_{i
  = 1}^{2^r } {(\sum\nolimits_{G_P \in \Omega _i } {f_s (G_P)^2 \Pr [G_P]}
  - \pi _i \mu _i ^2 )}  \\
  \quad \quad \quad \quad \quad \;\;\; = \frac{1}{N}\sum\nolimits_{G_P
  \in \Omega } {\Pr [G_P]f_s (G_P)^2 }  - \frac{1}{N}\sum\nolimits_{i =
  1}^{2^r } {\pi _i \mu _i ^2 }.  \\
 \end{array}
\]
Given this, we can derive the difference between $Var(\hat
F_{BSSI})$ and $Var(\hat F_{\nmc})$ (Eq.~(\ref{eq:nmcvariance})) as
follows:
\[
\begin{array}{l}
 \quad \quad  \; Var(\hat F_{\nmc} ) - Var(\hat F_{BSSI} ) \\
 \quad \quad \quad \; = \frac{1}{N}(\sum\nolimits_{i = 1}^{2^r } {\pi
 _i \mu _i ^2 }  - (\mathbb{E}[f_s (G_P)])^2 ) \\
  \quad \quad \quad \;  = \frac{1}{N}(\sum\nolimits_{i = 1}^{2^r } {\pi _i \mu _i ^2 }  - (\sum\nolimits_{G_P \in \Omega } {\Pr [G_P]f_s } (G_P))^2 ) \\
  \quad \quad \quad \;  = \frac{1}{N}(\sum\nolimits_{i = 1}^{2^r } {\pi _i \mu _i ^2 }  - (\sum\nolimits_{i = 1}^{2^r } {\pi _i \sum\limits_{G_P \in \Omega _i } {\frac{{\Pr [G_P]}}{{\pi _i }}f_s } (G_P)} )^2 ) \\
  \quad \quad \quad \;  = \frac{1}{N}(\sum\nolimits_{i = 1}^{2^r } {\pi _i \mu _i ^2 }  - (\sum\nolimits_{i = 1}^{2^r } {\pi _i \mu _i } )^2 ) \\
  \quad \quad \quad \; = \frac{{Var(\mu _i )}}{N} \\
  \quad \quad \quad \;  \ge 0. \\
 \end{array}
\]
Note that in the last equality $\mu_i$ can be treated as a random
variable. Then, we have $\sum\nolimits_{i = 1}^{2^r } {\pi _i \mu _i
^2 } =\mathbb{E}(\mu_i ^2)$ and $ (\mathbb{E}(\mu_i))^2 =
(\sum\nolimits_{i = 1}^{2^r } {\pi _i \mu _i } )^2 $, thus the last
equality holds. This completes the proof.
\end{proof}

\stitle{The \bssi algorithm}: Given the stratification and sample
allocation methods, we present our basic stratified sampling
algorithm in Algorithm~\ref{alg:bssi}.
First, Algorithm~\ref{alg:bssi} selects $r$ edges to partition the
population into $2^r$ strata according to an edge-selection strategy
(line 2), either random or \bfs edge-selection. For convenience, we
refer to the \bssi estimator with random edge-selection and the
\bssi estimator with \bfs edge-selection as \bssi-{\sl RM} and
\bssi-\bfs estimator, respectively.
Second, according to our sample allocation method, the algorithm
draws $\pi_iN$ samples from the stratum $i$ (line~8-13).  Finally,
the algorithm outputs the \bssi estimator $\hat F_{BSSI}$. Notice
that it takes $O(m)$ time for both generating a possible graph $G$
and performing \bfs on $G$. Besides, the algorithm needs to draw $N$
possible graphs. Hence, the time complexity of
Algorithm~\ref{alg:bssi} is $O(mN)$, which has the same complexity
as the \nmc estimator. However, our \bssi estimator significantly
reduces the variance of the \nmc estimator. The advantages of the
\bssi estimator are twofold. On one hand, given the sample size, the
\bssi estimator is more accurate than the \nmc estimator as it has a
smaller variance. On the other hand, to achieve the same variance,
the \bssi estimator needs a smaller sample size than that of the
\nmc estimator, thus it reduces the time complexity of the sampling
process.

\begin{algorithm}[t]
\caption{\bssi($\mathcal{G}$, $N$, $s$)} \label{alg:bssi}
 {
\begin{tabbing}
    {\bf\ Input}: \hspace{0.1cm}\= Influence network $\mathcal{G}$,
               sample size $N$, and the seed node $s$. \\
{\bf\ Output}: The \bssi estimator $\hat{F}$.
\end{tabbing}
\begin{algorithmic}[1]
\STATE $\hat F \leftarrow 0$; \STATE Choose $r$ edges according to
an edge-selection strategy; \FOR{$i = 1$ to $2^r$}
   \STATE Let $X_i$ be the status vector of stratum $i$;
   \STATE Compute $\pi _i $ by Eq.~(\ref{eq:piexact});
   \STATE $N_i \leftarrow [\pi_i N]$;
   \STATE $t \leftarrow 0$;
   \FOR {$j = 1$ to $N_i$}
      \STATE Flip $m-r$ coins to determine the rest $m-r$ edges;
      \STATE Let $Y_j$ be the status vector of the rest $m-r$ edges;
      \STATE Append $X_i$ to $Y_j$ to generate a possible graph $G_j$;
      \STATE Compute $f_s(G_j)$ by the \bfs algorithm;
      \STATE $t \leftarrow t + f_s(G_j)$;
   \ENDFOR
   \STATE $t \leftarrow t/N_i$;
   \STATE $\hat F \leftarrow \hat F+\pi _i t$;
\ENDFOR \STATE \textbf{return} $\hat F$;
\end{algorithmic}
}
\end{algorithm}

\comment{
\begin{table}[t]
\begin{center}
{\small
\begin{tabular}{l}
\hline
\textbf{Algorithm 3:} \bssi Estimator  \\
\quad \quad \quad \quad \quad \; \; \; $\hat F$ = \bssi($\mathcal{G}$, $N$, $s$)\\
Input parameter $\mathcal{G}$: The influence network \\
Input parameter N: The sample size \\
Input parameter $s$: The seed node \\
Output parameter $\hat F$: The \bssi estimator \\
\hline
1: Initialize $\hat F = 0$\\
2: Choosing $r$ edges according to a edge-selection strategy\\
3: \textbf{for} $i = 1$ to $2^r$ \textbf{do} \\
4: \quad Let $X_i$ be the status vector of stratum $i$ \\
5: \quad Computing $\pi _i $ by Eq.~(\ref{eq:piexact}) \\
6: \quad $N_i$ = $[\pi_i N]$ \\
7: \quad Initialize $t = 0$ \\
8: \quad \textbf{for} $j = 1$ to $N_i$ \textbf{do} \\
9: \quad \quad Flipping $m-r$ coins to determine the rest $m-r$ edges \\
10:\;\;\;\quad Let $Y_j$ be the status vector of the rest $m-r$ edges \\
11:\;\;\;\quad Appending $X_i$ to $Y_j$ to generate a possible graph $G_j$ \\
12:\;\;\;\quad Computing $f_s(G_j)$ by the \bfs algorithm \\
13:\;\;\;\quad $t$ = $t$ + $f_s(G_j)$ \\
14:\;\;\;\textbf{end for} \\
15:\;\;\;$t$ = $t/N_i$ \\
16:\;\;\;$\hat F=\hat F+\pi _i t$\\
17: \textbf{end for} \\
18: \textbf{return} $\hat F$ \\
\hline
\end{tabular}
} \label{alg:bssalg}
\end{center}\vspace*{-0.5cm}
\end{table}
}

\subsection{Recursive stratified sampling estimator (I)}
\label{subsec:rss}

Recall that the \bssi estimator splits the entire set of possible
graphs into $2^r$ subsets, which corresponds to the top two layers
in the enumeration tree (Fig.~\ref{fig:enumtree}).  Interestingly,
we observe that the basic stratified sampling (\bssi) can be applied
into any internal nodes of the enumeration tree. Based on this
observation, we develop a recursive stratified sampling estimator,
namely \rssi estimator, which is described in
Algorithm~\ref{alg:rssi}. The \rssi estimator recursively partitions
the sample size $N$ to $N_i=\pi_i N$ ($i = 1, 2, \cdots, 2^r$) for
estimating the influenceability at the stratum $i$ (line 9-19). Note
that since the \bssi estimator is unbiased, the \rssi estimator is
also unbiased. Moreover, \rssi reduces the variance at each
partition, thus the variance of \rssi is significantly smaller than
the variance of \bssi as stated by the following theorem.

\begin{theorem}
  \label{thm:rssvar}
Let $Var(\hat F_{RSSI})$ be the variance of \rssi, then $Var(\hat
F_{RSSI}) \le Var(\hat F_{BSSI})$.
\end{theorem}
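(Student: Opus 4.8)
The plan is to prove, by induction on the number $|A_2|$ of not-yet-enumerated edges, the following stronger invariant $(\star)$: for every node of the enumeration tree (Fig.~\ref{fig:enumtree}), i.e.\ every sub-population $\mathcal{G}(A_1,A_2,X)$, and every sample budget $N'$, the \rssi estimator of the corresponding conditional influenceability (Eq.~(\ref{eq:infcasei})) has variance no larger than the variance of the \nmc estimator of the same quantity computed with $N'$ samples drawn from the conditional distribution $\Pr[G_P]/\pi$ on that sub-population. The theorem itself then falls out of the inductive step applied at the root, before one even needs to invoke Theorem~\ref{thm:propallocate} there.

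First I would record the structural facts that \bssi and \rssi share. Both pick the same $r$ edges $T$, partition $\Omega$ into $\Omega_1,\dots,\Omega_{2^r}$ according to Table~\ref{tbl:statumdesign}, and allocate $N_i=\pi_i N$ samples to stratum $i$; since a fresh batch of coins is flipped for each sample, the per-stratum estimators, and in \rssi also the recursive sub-calls inside the strata, are mutually independent. Writing $\hat F^{(i)}_{BSS}=\frac{1}{N_i}\sum_{j=1}^{N_i}f_s(G_{i,j})$ for the plain Monte-Carlo average in stratum $i$ and $\hat F^{(i)}_{RSS}$ for the \rssi estimator obtained by recursing on $\mathcal{G}(T,E\backslash T,X_i)$ with budget $N_i$, we have $\hat F_{BSSI}=\sum_{i=1}^{2^r}\pi_i\hat F^{(i)}_{BSS}$ and $\hat F_{RSSI}=\sum_{i=1}^{2^r}\pi_i\hat F^{(i)}_{RSS}$, and by independence $Var(\hat F_{BSSI})=\sum_i\pi_i^2\,Var(\hat F^{(i)}_{BSS})=\sum_i\pi_i^2\sigma_i/N_i$ (cf.\ Eq.~(\ref{eq:varbss})) and $Var(\hat F_{RSSI})=\sum_i\pi_i^2\,Var(\hat F^{(i)}_{RSS})$. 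The key remark is that $\sigma_i/N_i$ is exactly the \nmc variance of the sub-population $\mathcal{G}(T,E\backslash T,X_i)$ with $N_i$ samples, so $(\star)$ applied to that sub-population yields $Var(\hat F^{(i)}_{RSS})\le\sigma_i/N_i=Var(\hat F^{(i)}_{BSS})$ for each $i$, and summing the weighted terms gives the theorem.

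It remains to establish $(\star)$. The base case is when the recursion terminates: either $A_2=\emptyset$, where each stratum is a single possible graph and all variances vanish, or the stopping rule fires and \rssi degenerates to \nmc on that node; in both cases $(\star)$ holds with equality. For the inductive step, consider a node with budget $N'$ that picks $r$ further edges and hence has $2^r$ children with probabilities $\pi_i$ and budgets $\pi_iN'$; the decomposition above gives $Var(\hat F_{RSSI})=\sum_i\pi_i^2 Var(\hat F^{(i)}_{RSS})$, and since each child has strictly fewer unenumerated edges the induction hypothesis yields $Var(\hat F^{(i)}_{RSS})\le\sigma_i/(\pi_iN')$, whence $Var(\hat F_{RSSI})\le\frac{1}{N'}\sum_i\pi_i\sigma_i$. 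By the computation in the proof of Theorem~\ref{thm:propallocate}, the right-hand side is precisely $Var(\hat F_{BSSI})$ for this node with allocation $N_i=\pi_iN'$, and by Theorem~\ref{thm:propallocate} this is at most the \nmc variance at this node, which closes the induction; specializing the same chain at the root gives $Var(\hat F_{RSSI})\le\frac{1}{N}\sum_i\pi_i\sigma_i=Var(\hat F_{BSSI})$.

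The genuinely delicate part is the bookkeeping required to apply Theorems~\ref{thm:bssunbiased} and~\ref{thm:propallocate} at every level rather than the inequality chain itself: one must verify that the restriction of \rssi to a stratum is itself an \emph{unbiased} stratified-sampling estimator of the conditional subproblem, which relies on the factorization of Theorem~\ref{thm:factorthm} (so that the conditional influenceabilities recombine linearly with weights $\pi_i$) and on the unbiasedness argument of Theorem~\ref{thm:bssunbiased} being valid at every recursion level. One must also reconcile the rounding $N_i=[\pi_iN]$ and the precise termination condition of the recursion with the clean identity $N_i=\pi_iN$ used above, e.g.\ by following the paper's convention of treating the relevant quantities as integral, or by absorbing the discrepancy into lower-order terms.
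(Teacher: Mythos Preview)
Your proposal is correct, and rests on the same underlying fact as the paper's proof --- that proportional stratification never increases variance (the content of Theorem~\ref{thm:propallocate}) and that this can be applied at every level of the recursion --- but you package it differently. The paper argues only for two levels of partitioning: it writes the \rssi estimator explicitly as a double sum over strata $i$ and sub-strata $k$, computes its variance as $\sum_i\frac{\pi_i}{N}\sum_k\omega_k\sigma_{i,k}$, and then shows directly that $\sum_k\omega_k\sigma_{i,k}\le\sigma_i$ for each $i$ by repeating the calculation from the proof of Theorem~\ref{thm:propallocate} at the sub-stratum level; deeper recursion is dismissed with ``similar arguments can be used.'' Your route instead strengthens the claim to the invariant $(\star)$ (\rssi beats \nmc at every node of the enumeration tree) and runs a clean induction on $|A_2|$, invoking Theorem~\ref{thm:propallocate} itself at each step rather than re-deriving its inequality. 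This buys you a proof that genuinely covers arbitrary recursion depth without hand-waving, at the cost of introducing the auxiliary invariant; the paper's version is more concrete but, as written, only a sketch beyond two levels.
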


\begin{proof}
We focus on the case that \rssi only partitions the population $2^r
+ 1$ times. Similar arguments can be used to prove the case of more
partitions. 
At the first partition, \rssi splits the population into $2^r$
strata, which is equivalent to \bssi. In each stratum $i$ ($i = 1,
\cdots, 2^r$), \rssi recursively partitions it into $2 ^ r$
sub-strata. Let $\Omega_i$, $\mu_i$, $\sigma_i$ and $N_i$ be the
probability space, the expectation, the variance, and the sample
size of the stratum $i$ at the first partition, respectively. Let
$\pi_i = \Pr[G_P \in \Omega_i]$ be the probability of a sample in
stratum $i$ as defined in Eq.~(\ref{eq:piexact}). Similarly, for
each stratum $i$, we denote the probability space, the expectation,
the variance, and the sample size of the sub-stratum $k$ ($k = 1,
\cdots, 2^r$), as $\Omega_{i, k}$, $\mu_{i, k}$, $\sigma_{i, k}$,
and $N_{i, k}$, respectively. Further, we denote the probability of
a sample in a sub-stratum $k$ as $\pi_{i, k}$, i.e., $\pi_{i, k} =
\Pr[G_P \in \Omega_{i, k}]$. Then, we have $\pi_{i, k} = \pi_i
\omega_k$, where $\omega_k$ denotes the probability of a sample in
sub-stratum $k$ conditioning on it is in stratum $i$, i.e.,
$\omega_k = \Pr[G_P \in \Omega_{i, k} | G_P \in \Omega_i]$. The
\rssi estimator is given by $
 \hat F _{RSSI}  = \sum\nolimits_{i = 1}^{2^r }
{\sum\nolimits_{k = 1}^{2^r } {\pi _{i,k} \frac{1}{{N_{i,k}
}}\sum\nolimits_{j = 1}^{N_{i,k} } {f _s (G_{i,k,j} )} } }, $ where
$G_{i,k,j}$ ($j = 1, \cdots, N_{i, k}$) denotes a possible graph
sampled from the sub-stratum $k$ of the stratum $i$. Then, the
variance of \rssi is $ Var(\hat F_{RSSI}) = \sum\nolimits_{i =
1}^{2^r } {\sum\nolimits_{k = 1}^{2^r } {\frac{{\pi _{i,k}^2 \sigma
_{i,k} }}{{N_{i,k} }}} }. $ By our sample allocation strategy, we
have $N_{i, k} = N \pi_{i, k}$, thereby the variance can be
simplified to $ Var(\hat F_{RSSI}) = \sum\nolimits_{i = 1}^{2^r }
{\frac{1}{N}\sum\nolimits_{k = 1}^{2^r } {\pi _{i,k} \sigma _{i,k} }
} $ Further, by $\pi_{i, k} = \pi_i \omega_k$, we have $ Var(\hat
F_{RSSI}) = \sum\nolimits_{i = 1}^{2^r } {\frac{{\pi _i
}}{N}\sum\nolimits_{k = 1}^{2^r } {\omega _k \sigma _{i,k} } }. $ By
the proportional sample allocation, we have $Var(\hat F_{BSSI}) =
\sum\nolimits_{i = 1}^{2^r } {\frac{{\pi _i }}{N}\sigma _i }$.
Therefore, the proof is completed followed by $\sum\nolimits_{k =
1}^{2^r } {\omega _k \sigma _{i,k} } \le \sigma_i$. By definition,
we have \[\begin{array}{l}
 \sum\nolimits_{k = 1}^{2^r } {\omega _k \sigma _{i,k} }  = \sum\nolimits_{k = 1}^{2^r } {\omega _k (\mathbb{E}(f _s (G_{i,k,j} )^2 ) - \mu _{i,k}^2 )}  \\
 \quad \quad \quad \quad \;\;\;\;\;\; = \sum\nolimits_{k = 1}^{2^r } {\omega _k (\sum\nolimits_{G_P  \in \Omega _{i,k} } {\frac{{\Pr [G_P ]}}{{\pi _{i,k} }}f _s (G_P )^2 }  - \mu _{i,k}^2 )}  \\
 \quad \quad \quad \quad \;\;\;\;\;\; = \sum\nolimits_{k = 1}^{2^r } {\sum\nolimits_{G_P  \in \Omega _{i,k} } {\frac{{\Pr [G_P ]}}{{\pi _i }}f _s (G_P )^2 }  - \sum\nolimits_{k = 1}^{2^r } {\omega _k \mu _{i,k}^2 } }  \\
 \quad \quad \quad \quad \;\;\;\;\;\; = \sum\nolimits_{G_P  \in \Omega _i } {\frac{{\Pr [G_P ]}}{{\pi _i }}f _s (G_P )^2 }  - \sum\nolimits_{k = 1}^{2^r } {\omega _k \mu _{i,k}^2 }.  \\
 \end{array}\]
 Then, we have \[
 \sigma _i  - \sum\nolimits_{k = 1}^{2^r } {\omega _k \sigma _{i,k} }  = \sum\nolimits_{k = 1}^{2^r } {\omega _k \mu _{i,k}^2 }  - \mu _i^2  \\
  = \sum\nolimits_{k = 1}^{2^r } {\omega _k \mu _{i,k}^2 }  - (\sum\nolimits_{k = 1}^{2^r } {\omega _k \mu _{i,k} } )^2  \ge 0. \\
\]
This completes the proof.
\end{proof}

The \rssi algorithm terminates until the sample size becomes smaller
than a given threshold ($\tau$) or the number of unsampled edges
smaller than $r$ (line 2). When the terminative conditions of the
\rssi algorithm satisfy, we perform a naive Monte-Carlo sampling for
estimating the influenceability (line 3-7).

Similar to the \bssi estimator, the partition approach in \rssi
estimator also depends on the edge-selection strategy (line~9).
Likewise, we have two edge-selection strategies for the \rssi
estimator, either random edge-selection or \bfs edge-selection.
%
%
%
For convenience, we refer to the \rssi estimator with random
edge-selection and with \bfs edge-selection as the \rssi-{\sl RM}
and \rssi-\bfs estimator, respectively.

Reconsider the example in Fig.~\ref{fig:exptargetgraph}, the \bfs
visiting order of the edges is $\{v_5 \to v_3, v_5 \to v_6, v_3 \to
v_1, v_3 \to v_4, v_6 \to v_2, v_1 \to v_2, v_1 \to v_3, v_1 \to
v_4, v_4 \to v_6, v_2 \to v_6\}$.  Assume $r=2$, according to the
\bfs visiting order, then the \rssi-\bfs first picks edge $v_5 \to
v_3$ and $v_5 \to v_6$ for stratification, and then selects the
edges $v_3 \to v_1$ and $v_3 \to v_4$, and so on. It worth of
mentioning that we can invoke the procedure \rssi($\mathcal{G},
\emptyset, E, \emptyset, N, s$), where $s$ is the seed node, to
calculate the \rssi estimator.

We analyze the time complexity of Algorithm~\ref{alg:rssi}. For
sampling a possible graph, Algorithm~\ref{alg:rssi} needs to
traverse the enumeration tree (Fig.~\ref{fig:enumtree}) from the
root node to the terminative node. Here the terminative node is a
node in the enumeration tree where the terminative conditions of the
recursion satisfy at that node, i.e.\, $N < \tau$ or $|E_2| < r$
holds in Algorithm~\ref{alg:rssi}. Let $\bar d$ be the average
length of the path from the root node to the terminative node. Then,
by analysis, the time complexity of the algorithm at each internal
node of the path is $O(r)$. Suppose that the total number of such
paths is $K$.  Then, the algorithm takes $O(K\bar d r)$ time
complexity at the internal nodes of all the paths. Note that $K$ is
bounded by the sample size $N$, and $\bar d$ is a very small number
w.r.t.\ $N$.  More specifically, we can derive that $\bar d = O(\log
_{2^r} N)$, which is a very small number. For example, assume $r=5$
and $N=100,000$, then we can get $\bar d \approx 3.3$. For all the
terminative nodes, the time complexity of the algorithm is $O(Nm)$.
This is because the algorithm needs to sample $N$ possible graphs in
total over all the terminative nodes, and for each possible graph
the algorithm performs a \bfs to compute the influenceability which
takes $O(m)$ time complexity. Since $O(K\bar d r)$ is dominated by
$O(Nm)$, the time complexity of Algorithm~\ref{alg:rssi} is
$O(Nm+K\bar d r)=O(Nm)$.

\begin{algorithm}[t]
\caption{\rssi($\mathcal{G}$, $E_1$, $E_2$, $X$, $N$, $s$)}
\label{alg:rssi}
 {
\begin{tabbing}
    {\bf\ Input}: \hspace{0.1cm}\= Influence network $\mathcal{G}$,
            the set of sampled edges $E_1$, the set of \\
         \> unsampled edges $E_2$,  sample size $N$, and the seed node $s$. \\
{\bf\ Output}: The \rssi estimator $\hat F$.
\end{tabbing}
\begin{algorithmic}[1]
\STATE $\hat F \leftarrow 0$; \IF {$N < \tau$ or $|E_2| < r$}
   \FOR {$j = 1$ to $N$}
       \STATE Flip $|E_2|$ coins to generate a possible graph $G_j$;
       \STATE Compute $f_s(G_j)$ by the \bfs algorithm;
       \STATE $\hat F \leftarrow \hat F + f_s(G_j)$;
   \ENDFOR
   \STATE \textbf{return} $\hat F/N$;
\ELSE
   \STATE Select $r$ edges from $E_2$ according to an
        edge-selection strategy \{Random or \bfs visiting order\};
   \STATE Let $T$ be the set of selected edges;
   \FOR {$i = 1$ to $2^r$}
      \STATE $Y \leftarrow X$ \{Recording the current status vector $X$\};
      \STATE Let $X_i$ be the status vector of set $T$ in stratum $i$;
      \STATE Append $X_i$ to $Y$;
      \STATE Compute $\pi _i $ by Eq.~(\ref{eq:piexact});
      \STATE $N_i  \leftarrow \left[ {\pi _i N} \right]$;
      \STATE $\mu_i \leftarrow$ \rssi($\mathcal{G}$, $E_1 \cup T$,
      $E_2 \backslash T$, $Y$, $N_i$, $s$);
      \STATE $\hat F \leftarrow \hat F + \pi_i \mu_i$;
   \ENDFOR
   \STATE  \textbf{return} $\hat F$;
\ENDIF
\end{algorithmic}
}
\end{algorithm}

\comment{
\begin{table}[t]\vspace*{-0.5cm}
\begin{center}
{\small
\begin{tabular}{l}
\hline
\textbf{Algorithm 4:} \rssi Estimator  \\
\quad \quad \quad \quad \quad \; \; \; $\hat F$ = \rssi($\mathcal{G}$, $E_1$, $E_2$, $X$, $N$, $s$)\\
Input parameter $\mathcal{G}$: The influence network \\
Input parameter $E_1$: Set of sampled edges  \\
Input parameter $E_2$: Set of unsampled edges \\
Input parameter $X$: The status vector of the sampled edges \\
Input parameter $N$: The sample size \\
Input parameter $s$: The seed set \\
Output parameter $\hat F$: The \rssi estimator \\
\hline
1: Initialize $\hat F = 0$ \\
2: \textbf{if} $N < \tau$ or $|E_2| < r$ \textbf{then} \\
3: \quad \textbf{for} $j = 1$ : $N$ \textbf{do} \\
4: \quad \quad Flipping $|E_2|$ coins to generate a possible graph $G_j$ \\
5: \quad \quad Computing $f_s(G_j)$ by the \bfs algorithm \\
6: \quad \quad $\hat F$ = $\hat F$ + $f_s(G_j)$ \\
7: \quad \textbf{end for} \\
8: \quad \textbf{return} $\hat F/N$\\
9: \textbf{else} \\
10:\;\quad Selecting $r$ edges from $E_2$ according to a \\
\;\;\;\quad \;\; edge-selection strategy \{Random or \bfs visiting order\} \\
11:\;\quad Let $T$ be the set of selected edges \\
12:\;\quad \textbf{for} $i = 1$ to $2^r$ \textbf{do} \\
13:\;\quad \quad Let $Y=X$ \{Recording the current status vector $X$\}\\
14:\;\quad \quad Let $X_i$ be the status vector of set $T$ in stratum $i$\\
15:\;\quad \quad Appending $X_i$ to $Y$ \\
16:\;\quad \quad Computing $\pi _i $ by Eq.~(\ref{eq:piexact}) \\
17:\;\quad \quad $N_i  = \left[ {\pi _i N} \right]$ \\
18:\;\quad \quad $\mu_i$=\rssi($\mathcal{G}$, $E_1 \cup T$, $E_2 \backslash T$, $Y$, $N_i$, $s$)\\
19:\;\quad \quad $\hat F$ = $\hat F$ + $\pi_i \mu_i$ \\
20:\;\quad \textbf{end for}\\
21:\;\quad \textbf{return} $\hat F$\\
22:\;\textbf{end if}\\
\hline
\end{tabular}
} \label{alg:exactalg}
\end{center}\vspace*{-0.5cm}
\end{table}
}

\section{New Type-II Estimators}
\label{sec:newestimatortypeII}

In this section, we propose two new stratified sampling estimators,
namely type-II basic stratified sampling (\bssii) estimator and
type-II recursive stratified sampling (\rssii) estimator. The \bssii
and \rssii are shown to be unbiased and their variance are
significantly smaller than the variance of the \nmc estimator. In
the following, we first introduce the \bssii estimator, and then
present the \rssii estimator.

\subsection{Basic stratified sampling estimator (II)}
\label{subsec:bssII} 

\stitle{Stratification}: We propose a new stratification method for
the \bssii estimator. This new stratification method splits the
entire probability space $\Omega$ into $r+1$ various subspaces
($\Omega_0, \cdots, \Omega_r$) by choosing $r$ edges. Specifically,
for stratum $0$, we set the statuses of all the $r$ selected edges
to ``0'', and for the stratum $i$ ($i \ne 0$), we set the status of
edge $i$ to ``1'' and the statuses of all the previous $i-1$ edges
(i.e.\ $e_1, \cdots, e_{i-1}$) to ``0''. Unlike the stratification
method of the \bssi estimator, this new stratification approach
allows us to set $r$ to be a big number, such as $r=50$. The stratum
design method is depicted in Table~\ref{tbl:newstatumdesign}.

\begin{table}[t] \tbl{Stratum design of the \bssii/\rssii estimator
\label{tbl:newstatumdesign}}{
\begin{tabular}{|l|l|c|}
\hline Edges &  $e_1$ $e_2$ $e_3$ $\cdots$ $e_r$ $e_{r+1}$ $\cdots$
$e_m$ & Prob. space\\ \hline
Stratum 0 & \;0 \;0  \;0 \; $\cdots$\;0 \;  $ * $ \; $\cdots$ \; $ * $ & $\Omega_0$\\
Stratum 1 & \;1 \;$ * $ \;$ * $\; $\cdots$ \;$ * $ \;  $ * $ \; $\cdots$ \; $ * $ & $\Omega_1$\\
Stratum 2 & \;0 \;1 \;$ * $\; $\cdots$ \;$ * $ \;  $ * $ \; $\cdots$ \; $ * $ & $\Omega_2$\\
Stratum 3 & \;0 \;0  \;1\; $\cdots$ \;$ * $ \;  $ * $ \; $\cdots$ \; $ * $& $\Omega_3$\\
$\; \; \; \;\cdots$ & \; \; \;\;\;\; \; \; $\cdots$  &  $\cdots$\\
Stratum $r$ & \;0 \;0  \;0\; $\cdots$ \;1 \;  $ * $ \; $\cdots$ \; $ * $ & $\Omega_{r}$\\
\hline
\end{tabular}
}
\end{table}

In Table~\ref{tbl:newstatumdesign}, each stratum (Stratum 0, Stratum
1, $\cdots$, Stratum $r$) corresponds to a subspace ($\Omega_0,
\Omega_1, \cdots, \Omega_r$). For any $i \ne j$, we have $\Omega _i
\cap \Omega _j = \phi$. Below, we show that $\bigcup\nolimits_{i =
0}^r {\Omega _i } = \Omega$. Let $T=(e_1, e_2, \cdots, e_r)$ be the
set of $r$ selected edges and $p_i$ ($i = 1$) be the corresponding
influence probability, then the probability of a possible graph in
stratum $i$ is given by
\begin{equation}
  \label{eq:nrsspi}
  \pi ^\prime _i  = \Pr[G_P \in \Omega _i ] = \left\{ \begin{array}{l}
 \prod\limits_{j = 1}^r {(1 - p_j )} ,\quad \;\; if \; i = 0 \\
 p_i\prod\limits_{j = 1}^{i - 1} {(1 - p_j ) ,\quad otherwise}  \\
 \end{array} \right.
\end{equation}
The following theorem implies $\bigcup\nolimits_{i = 0}^r {\Omega _i
} = \Omega $.

\begin{theorem}
\label{thm:stadesign} $\Pr [G_P \in \Omega ] = \sum\nolimits_{i =
0}^r {\Pr [G_P \in \Omega _i ]}  = 1$.
\end{theorem}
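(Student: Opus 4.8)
The plan is to establish the identity $\sum_{i=0}^{r}\pi'_i = 1$ directly from the closed form in Eq.~(\ref{eq:nrsspi}); the claimed consequence $\bigcup_{i=0}^{r}\Omega_i=\Omega$ then follows because the $\Omega_i$ are pairwise disjoint. The conceptual reason the identity holds is that the strata implement a case split on ``the first present edge among $e_1,\dots,e_r$'': stratum $i$ (for $1\le i\le r$) is precisely the event that $e_i$ is present and $e_1,\dots,e_{i-1}$ are all absent, while stratum $0$ is the event that none of $e_1,\dots,e_r$ is present. Over the randomness of the $r$ selected edges --- which by edge independence is all that $\pi'_i$ depends on --- these $r+1$ events are mutually exclusive and exhaustive, so their probabilities must sum to $1$. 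I would record this observation as motivation and then give the matching algebraic computation.

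For the computation, write $q_j:=1-p_j$ for $j=1,\dots,r$, with the usual empty-product convention $\prod_{j=1}^{0}q_j=1$. Since $p_i=1-q_i$, the nonzero stratum terms telescope:
\begin{equation*}
\pi'_i \;=\; p_i\prod_{j=1}^{i-1}q_j \;=\; \prod_{j=1}^{i-1}q_j \;-\; \prod_{j=1}^{i}q_j,\qquad 1\le i\le r .
\end{equation*}
Summing over $i=1,\dots,r$ collapses the sum to $\sum_{i=1}^{r}\pi'_i = \prod_{j=1}^{0}q_j-\prod_{j=1}^{r}q_j = 1-\prod_{j=1}^{r}q_j$, and adding $\pi'_0=\prod_{j=1}^{r}q_j$ yields $\sum_{i=0}^{r}\pi'_i=1$, which is exactly $\sum_{i=0}^{r}\Pr[G_P\in\Omega_i]=1$. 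Together with the trivial fact $\Pr[G_P\in\Omega]=1$ (sum Eq.~(\ref{eq:probpgdef}) over all $2^m$ possible graphs), this gives the full statement. For the corollary $\bigcup_{i=0}^{r}\Omega_i=\Omega$, I would note in one line that every possible graph $G_P$ lies in exactly one stratum --- stratum $0$ if $e_1,\dots,e_r\notin E_P$, and otherwise stratum $i$ for the smallest index with $e_i\in E_P$ --- so the $\Omega_i$ are disjoint and their union is all of $\Omega$, which also re-derives the probability identity combinatorially.

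There is no real obstacle here: the entire argument is the one telescoping step above. The only things worth being careful about are the empty-product conventions at the boundary cases $i=0$ and $i=1$ (so that the stratum-$0$ and stratum-$1$ probabilities match Eq.~(\ref{eq:nrsspi})), and stating the disjointness of the $\Omega_i$ explicitly, since it is what licenses summing the per-stratum probabilities. Both are routine.
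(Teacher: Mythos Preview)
Your proposal is correct and takes essentially the same approach as the paper: both reduce the identity $\sum_{i=0}^{r}\pi'_i=1$ to an elementary telescoping computation on the products $\prod_{j}(1-p_j)$. Your version frames the telescoping more explicitly (via $\pi'_i=\prod_{j<i}q_j-\prod_{j\le i}q_j$) and adds the probabilistic ``first present edge'' interpretation and the disjointness/exhaustiveness argument for $\bigcup_i\Omega_i=\Omega$, whereas the paper simply collapses the sum one term at a time, but the underlying argument is the same.
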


\begin{proof}
We prove it by the following equalities.

{\small
\[\begin{array}{l}
 \sum\nolimits_{i = 0}^r {\Pr [G_P \in \Omega _i ]}  \\
 = \prod\nolimits_{j = 1}^r {(1 - p_j )}  + p_1  + (1 - p_1 )p_2  +  \cdots  + p_r \prod\nolimits_{j = 1}^{r - 1} {(1 - p_j )}  \\
  = \prod\nolimits_{j = 1}^{r - 1} {(1 - p_j )}  + p_1  + (1 - p_1 )p_2  +  \cdots  + p_{r - 1} \prod\nolimits_{j = 1}^{r - 2} {(1 - p_j )}  \\
  \cdots  \\
  = 1 - p_1 + p_1 \\
  = 1 \\
 \end{array}\]
}
\end{proof}

Armed with Theorem \ref{thm:stadesign}, we conclude that the stratum
design approach described in Table~\ref{tbl:newstatumdesign} is a
valid stratification method.

\stitle{The \bssii estimator}: Similar to the \bssi estimator, we
let $N$ be the total sample size, and $N_i$ be the sample size of
the stratum $i$, and $G_{i,j}$ ($j=1, 2, \cdots, N_i$) be a possible
graph sampled from the stratum $i$. Then the \bssii estimator $\hat
F_{BSSII}$ is given by
\begin{equation}
\hat F_{BSSII}  = \sum\nolimits_{i = 0}^r {\pi^\prime _i
\frac{1}{{N_i }}} \sum\nolimits_{j = 1}^{N_i } {f_s (G_{i,j} )},
\end{equation}
where $\pi ^\prime _i$ is given in Eq.~(\ref{eq:nrsspi}). Similar to
Theorem \ref{thm:bssunbiased}, the following theorem shows that the
\bssii estimator is unbiased. The proof is similar to the proof of
Theorem~\ref{thm:bssunbiased}, thus we omit for brevity.

\begin{theorem}
  \label{thm:bssiiunbiased}
$F_s (\mathcal{G}) = \mathbb{E}(\hat F_{BSSII} )$.
\end{theorem}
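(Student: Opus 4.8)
The plan is to mirror the proof of Theorem~\ref{thm:bssunbiased} almost verbatim, substituting the $(r+1)$-stratum decomposition of Table~\ref{tbl:newstatumdesign} for the $2^r$-stratum one and invoking Theorem~\ref{thm:stadesign} in place of the trivial identity $\Omega = \bigcup_{i=1}^{2^r}\Omega_i$. First I would apply linearity of expectation to $\hat F_{BSSII} = \sum_{i=0}^{r} \pi'_i \frac{1}{N_i}\sum_{j=1}^{N_i} f_s(G_{i,j})$, pulling the deterministic weights $\pi'_i$ and $1/N_i$ outside. Since the $N_i$ graphs drawn from stratum $i$ are i.i.d., this yields $\mathbb{E}(\hat F_{BSSII}) = \sum_{i=0}^{r} \pi'_i\, \mathbb{E}(f_s(G_{i,1}))$.

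Second, I would pin down the distribution of a single sample $G_{i,1}$ drawn from stratum $i$. For stratum $0$ the sampler fixes $e_1,\dots,e_r$ absent and flips the remaining $m-r$ coins independently; for $i\ge 1$ it fixes $e_1,\dots,e_{i-1}$ absent, $e_i$ present, and flips the remaining $m-i$ coins independently. By the mutual independence of the edge coins in the probabilistic graph model, $G_{i,1}$ has probability exactly $\Pr[G_P]/\pi'_i$ for every $G_P\in\Omega_i$ and probability $0$ outside $\Omega_i$, with $\pi'_i$ as in Eq.~(\ref{eq:nrsspi}). Hence $\mathbb{E}(f_s(G_{i,1})) = \sum_{G_P\in\Omega_i} f_s(G_P)\,\Pr[G_P]/\pi'_i$, and multiplying by $\pi'_i$ cancels the denominator, giving $\mathbb{E}(\hat F_{BSSII}) = \sum_{i=0}^{r}\sum_{G_P\in\Omega_i} f_s(G_P)\Pr[G_P]$.

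Third, I would collapse the double sum. The strata $\Omega_0,\dots,\Omega_r$ are pairwise disjoint (noted just before Theorem~\ref{thm:stadesign}) and, by Theorem~\ref{thm:stadesign}, partition $\Omega$; therefore the double sum equals $\sum_{G_P\in\Omega} f_s(G_P)\Pr[G_P] = F_s(\mathcal{G})$ by Eq.~(\ref{eq:influencefundef}), which is the claim.

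The only step that is not pure bookkeeping is the second one: confirming that the stratum-$i$ sampling scheme reproduces the \emph{conditional} law $\Pr[\,\cdot \mid \Omega_i]$ rather than merely some distribution supported on $\Omega_i$. This is exactly where the particular form of $\pi'_i$ in Eq.~(\ref{eq:nrsspi}) — the marginal probability of the prefix constraint defining $\Omega_i$ — together with edge independence is used; the rest follows the argument of Theorem~\ref{thm:bssunbiased} line for line, which is why the paper omits it for brevity.
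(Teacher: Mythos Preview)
Your proposal is correct and follows exactly the approach the paper intends: it is the proof of Theorem~\ref{thm:bssunbiased} with the index set $\{1,\dots,2^r\}$ replaced by $\{0,\dots,r\}$, the weights $\pi_i$ replaced by $\pi'_i$, and Theorem~\ref{thm:stadesign} supplying the partition property. The paper itself omits the argument precisely because it is line-for-line identical, so there is nothing to add.
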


The variance of the \bssii estimator is given by
\begin{equation}
Var(\hat F_{BSSII}) = \sum\nolimits_{i = 0}^r {{\pi ^\prime _i}^2
\frac{{\sigma _i }}{{N_i }}},
\end{equation}
where $\sigma _i$ denotes the variance of the sample in the stratum
$i$.

\stitle{Sample allocation}: Analogous to the \bssi estimator, for
the \bssii estimator, we can derive that the optimal sample
allocation is given by $N_i = N\pi ^\prime _i \sqrt {\sigma _i }
/\sum\nolimits_{i = 0}^r {\pi ^\prime _i \sqrt {\sigma _i}}$.  This
optimal allocation strategy needs to know the variance of the sample
in each stratum, which is impossible in our problem.  Therefore,
similar to the sample allocation approach used in the \bssi
estimator, for the \bssii estimator, we set the sample size of the
stratum $i$ equals to $\pi^\prime _i N$, i.e.\ $N_i = \pi^\prime _i
N$. On the basis of this sample allocation method, we show that the
variance of the \bssii estimator is smaller than the variance of the
\nmc estimator as stated by the following theorem. The proof of the
theorem is similar to theorem~\ref{thm:propallocate}, thus we
omitted for brevity.

\begin{theorem}
\label{thm:bssIIsamplealloc}If $N_i = \pi ^\prime _i N$, $Var(\hat
F_{BSSII}) \le Var(\hat F_{\nmc})$.
\end{theorem}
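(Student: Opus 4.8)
The plan is to mirror the proof of Theorem~\ref{thm:propallocate}, replacing the $2^r$-stratum partition of the type-I estimator by the $(r+1)$-stratum partition of the type-II estimator. Under the allocation $N_i = \pi^\prime_i N$, the variance formula $Var(\hat F_{BSSII}) = \sum_{i=0}^r {\pi^\prime_i}^2 \sigma_i / N_i$ collapses to $\frac{1}{N}\sum_{i=0}^r \pi^\prime_i \sigma_i$. Writing $\mu_i = \mathbb{E}(f_s(G_{i,j}))$ for the stratum mean and expanding $\sigma_i = \sum_{G_P \in \Omega_i} f_s(G_P)^2 \Pr[G_P]/\pi^\prime_i - \mu_i^2$, the same telescoping as before gives
\[
Var(\hat F_{BSSII}) = \frac{1}{N}\sum_{G_P \in \Omega} \Pr[G_P] f_s(G_P)^2 - \frac{1}{N}\sum_{i=0}^r \pi^\prime_i \mu_i^2,
\]
where the first step uses $\bigcup_{i=0}^r \Omega_i = \Omega$ and pairwise disjointness, which is exactly what Theorem~\ref{thm:stadesign} guarantees.

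Next I would subtract this from $Var(\hat F_{\nmc})$ as given in Eq.~(\ref{eq:nmcvariance}). The $\frac{1}{N}\sum_{G_P} \Pr[G_P] f_s(G_P)^2$ terms cancel, leaving
\[
Var(\hat F_{\nmc}) - Var(\hat F_{BSSII}) = \frac{1}{N}\Bigl( \sum\nolimits_{i=0}^r \pi^\prime_i \mu_i^2 - (\mathbb{E}[f_s(G_P)])^2 \Bigr).
\]
Then I would rewrite $\mathbb{E}[f_s(G_P)] = \sum_{i=0}^r \pi^\prime_i \mu_i$ (again using Theorem~\ref{thm:stadesign} plus the definition of $\mu_i$ via $F_s$ conditioned on $\Omega_i$), so the bracket becomes $\sum_{i=0}^r \pi^\prime_i \mu_i^2 - (\sum_{i=0}^r \pi^\prime_i \mu_i)^2$. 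Since $\sum_{i=0}^r \pi^\prime_i = 1$, the $\pi^\prime_i$ form a probability distribution and this quantity is precisely $Var(\mu_I)$ where $I$ is the stratum index drawn with probability $\pi^\prime_i$; by Jensen (or just $\mathbb{E}[Z^2] \ge (\mathbb{E}[Z])^2$) it is $\ge 0$, which finishes the proof.

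The only genuinely new ingredient compared with Theorem~\ref{thm:propallocate} is that the partition is now into $r+1$ parts of the $\Omega_i$-type with possibly nonuniform $\pi^\prime_i$ given by Eq.~(\ref{eq:nrsspi}); everything downstream is formally identical once one observes that the proof of Theorem~\ref{thm:propallocate} never used $M = 2^r$ or equal weights — it only used that the strata partition $\Omega$ and that $\sum_i \pi_i = 1$. So I do not expect a real obstacle; the one point to state carefully is the appeal to Theorem~\ref{thm:stadesign} to justify both the telescoping of the $f_s^2$ sum and the identity $\sum_i \pi^\prime_i \mu_i = F_s(\mathcal{G})$. Because the paper explicitly says the proof is omitted "for brevity," a two-line remark pointing to the analogue of Theorem~\ref{thm:propallocate} with these substitutions is really all that is needed.
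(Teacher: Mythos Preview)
Your proposal is correct and is precisely the approach the paper intends: the paper omits the proof entirely, stating only that it is ``similar to Theorem~\ref{thm:propallocate},'' and your write-up carries out exactly that substitution ($2^r \to r+1$ strata, $\pi_i \to \pi^\prime_i$), with the partition property supplied by Theorem~\ref{thm:stadesign}. Your closing observation that the original argument never used the specific number of strata or equal weights is the right justification for why the transfer is immediate.
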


However, it is very hard to compare the variance of the \bssii
estimator with the variance of the \bssi estimator. In our
experiments, we find that these two estimators achieve comparable
variance.

\stitle{The \bssii algorithm}: With the stratification and sample
allocation method, we describe the \bssii algorithm in
Algorithm~\ref{alg:bssii}.  Algorithm~\ref{alg:bssii} picks $r$
edges to split the entire population into $r+1$ strata in terms of
an edge-selection strategy (line~2).  Any of the two edge-selection
strategies (random edge-selection and \bfs edge-selection) used in
the \bssi algorithm can also be used in the \bssii algorithm. We
refer to the \bssii estimator with the random edge-selection and the
\bssii estimator with \bfs edge-selection as \bssii-{\sl RM} and
\bssii-\bfs estimator, respectively.
In terms of the sample allocation method of the \bssii estimator,
Algorithm~\ref{alg:bssii} picks $N_i = \pi ^\prime _i N$ samples
from the stratum $i$, for $i = 0, 1, \cdots, r$, and outputs the
\bssii estimator $\hat F_{BSSII}$. Like the \bssi estimator, the
time complexity of \bssii estimator is $O(Nm)$. This is because the
\bssii needs to draw $N$ possible graphs, and both sampling each
possible graph $G$ and computing $F_s(G)$ take $O(m)$ time.

\begin{algorithm}[t]
\caption{ \bssii($\mathcal{G}$, $N$, $s$)} \label{alg:bssii}
 {
\begin{tabbing}
    {\bf\ Input}: \hspace{0.1cm}\= Influence network $\mathcal{G}$,
               sample size $N$, and the seed node $s$. \\
{\bf\ Output}: The \bssii estimator $\hat F$.
\end{tabbing}
\begin{algorithmic}[1]
\STATE $\hat F \leftarrow 0$; \STATE Select $r$ edges according to
an edge-selection strategy; \FOR {$i = 0$ to $r$}
   \STATE Compute $\pi ^\prime _i $ by Eq.~(\ref{eq:nrsspi});
   \STATE $N_i \leftarrow [\pi ^\prime _i N]$;
   \STATE $t \leftarrow 0$;
   \IF {i = 0}
      \STATE $k \leftarrow r$;
   \ELSE
      \STATE $k \leftarrow i$;
   \ENDIF
   \STATE Let $E_i$ be the set of edges to be determined under stratum $i$;
   \FOR {$j = 1$ to $N_i$}
      \STATE Flip $m-k$ coins to determine $E_i$, and thus generate a
             possible graph $G_j$;
      \STATE Compute $f_s(G_j)$ by the \bfs algorithm;
      \STATE $t \leftarrow t + f_s(G_j)$;
   \ENDFOR
   \STATE $t \leftarrow t/N_i$;
   \STATE $\hat F \leftarrow \hat F+\pi ^\prime _i t$;
\ENDFOR \STATE \textbf{return} $\hat F$;
\end{algorithmic}
}
\end{algorithm}

\comment{
\begin{table}[t]\vspace*{-0.5cm}
\begin{center}
{\small
\begin{tabular}{l}
\hline
\textbf{Algorithm 5:} \bssii Estimator  \\
\quad \quad \quad \quad \quad \; \; \; $\hat F$ = \bssii($\mathcal{G}$, $N$, $s$)\\
Input parameter $\mathcal{G}$: The influence network \\
Input parameter N: The sample size \\
Input parameter s: The seed node \\
Output parameter $\hat F$: The \bssii estimator \\
\hline
1: Initialize $\hat F = 0$ \\
2: Selecting $r$ edges according to a edge-selection strategy \\
3: \textbf{for} $i = 0$ to $r$ \textbf{do} \\
4: \quad Computing $\pi ^\prime _i $ by Eq.~(\ref{eq:nrsspi}) \\
5: \quad $N_i$ = $[\pi ^\prime _i N]$ \\
6: \quad Initialize $t = 0$\\
7: \quad \textbf{if} i = 0 \textbf{then} \\
8: \quad \quad $k=r$ \\
9: \quad \textbf{else} \\
10:\;\;\; \quad $k=i$ \\
11:\;\;\;\textbf{end if} \\
12:\;\;\;Let $E_i$ be the set of edges to be determined under stratum $i$\\
13:\;\;\;\textbf{for} $j = 1$ to $N_i$ \textbf{do} \\
14:\;\;\;\quad Flipping $m-k$ coins to determine $E_i$, \\
\quad \quad \quad and thus generate a possible graph $G_j$ \\
15:\;\;\;\quad Computing $f_s(G_j)$ by the \bfs algorithm \\
16:\;\;\;\quad $t$ = $t$ + $f_s(G_j)$ \\
17:\;\;\;\textbf{end for} \\
18:\;\;\;$t$ = $t/N_i$ \\
19:\;\;\;$\hat F=\hat F+\pi ^\prime _i t$\\
20: \textbf{end for} \\
21: \textbf{return} $\hat F$ \\
\hline
\end{tabular}
} \label{alg:bssalg}
\end{center}\vspace*{-0.5cm}
\end{table}
}

\subsection{Recursive stratified sampling estimator (II)}
\label{subsec:rssII}

Based on the \bssii estimator, in this subsection, we develop
another new recursive stratified sampling estimator, namely \rssii
estimator. Similar to the idea of the \rssi estimator, the \rssii
estimator makes use of the \bssii estimator as the basic component
and recursively applies the \bssii estimator at each stratum. More
specifically, the \rssii estimator first partitions the entire
probability space $\Omega$ into $r+1$ subspace $\Omega_i$ ($i=0,1,
\cdots, r$) according to the stratification method of the \bssii
estimator. The same partition procedure is recursively performed in
each subspace $\Omega_i$. At each partition, the \rssii estimator
utilizes the same sample allocation method as the \bssii estimator
to allocate the sample size. The recursion process of the \rssii
estimator will terminate until the sample size is smaller than a
given threshold ($\tau$) or the number of unsampled edges is smaller
than $r$. Since the \bssii estimator is unbiased, the \rssii
estimator is also unbiased. The variance of the \rssii estimator is
smaller than the variance of the \bssii estimator, because the
\rssii estimator recursively reduces variance at each partition
while the \bssii estimator only reduces variance at one partition.
Similar to Theorem~\ref{thm:rssvar}, we have the following theorem.
\begin{theorem}
  \label{thm:rssiivar}
Let $Var(\hat F_{RSSII})$ be the variance of \rssi, then $Var(\hat
F_{RSSII}) \le Var(\hat F_{BSSII})$.
\end{theorem}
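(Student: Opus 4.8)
The plan is to mirror the proof of Theorem~\ref{thm:rssvar}, carrying out the argument for the case in which \rssii performs exactly one additional level of recursion beyond the first partition; the general case then follows by the same argument applied inductively down the recursion tree. First I would fix notation parallel to that proof. After the first partition the population splits into the $r+1$ strata $\Omega_0,\dots,\Omega_r$ of Table~\ref{tbl:newstatumdesign}, with probabilities $\pi'_i$ given by Eq.~(\ref{eq:nrsspi}), stratum expectations $\mu_i$, stratum variances $\sigma_i$, and sample sizes $N_i=\pi'_i N$ — this is precisely the \bssii setting. Inside each $\Omega_i$ the \rssii estimator applies the type-II stratification once more, producing sub-strata $\Omega_{i,0},\dots,\Omega_{i,r}$; by Theorem~\ref{thm:stadesign} applied to the edges still marked ``$*$'' in $\Omega_i$ this is again a valid partition, so the conditional probabilities $\omega_k=\Pr[G_P\in\Omega_{i,k}\mid G_P\in\Omega_i]$ satisfy $\sum_{k=0}^r\omega_k=1$, and the unconditional probabilities factor as $\pi'_{i,k}:=\Pr[G_P\in\Omega_{i,k}]=\pi'_i\omega_k$. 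I denote by $\mu_{i,k}$ and $\sigma_{i,k}$ the expectation and variance of a sample drawn from $\Omega_{i,k}$.

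Next I would write the estimator in closed form, $\hat F_{RSSII}=\sum_{i=0}^r\sum_{k=0}^r\pi'_{i,k}\frac{1}{N_{i,k}}\sum_{j=1}^{N_{i,k}}f_s(G_{i,k,j})$, whose variance — since all sub-samples are independent — equals $\sum_{i=0}^r\sum_{k=0}^r (\pi'_{i,k})^2\sigma_{i,k}/N_{i,k}$. Substituting the proportional allocation $N_{i,k}=N\pi'_{i,k}$ and then $\pi'_{i,k}=\pi'_i\omega_k$ gives $Var(\hat F_{RSSII})=\frac1N\sum_{i=0}^r\pi'_i\sum_{k=0}^r\omega_k\sigma_{i,k}$, whereas the \bssii variance formula with $N_i=\pi'_iN$ gives $Var(\hat F_{BSSII})=\frac1N\sum_{i=0}^r\pi'_i\sigma_i$. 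Hence the theorem reduces to the single inequality $\sum_{k=0}^r\omega_k\sigma_{i,k}\le\sigma_i$ for every $i$.

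To finish I would prove that inequality exactly as in Theorem~\ref{thm:rssvar}: expand $\sigma_{i,k}=\mathbb{E}(f_s(G_{i,k,j})^2)-\mu_{i,k}^2$, use the conditional weighting $\Pr[G_P]/\pi'_{i,k}$ inside $\Omega_{i,k}$ together with $\pi'_{i,k}=\pi'_i\omega_k$ to collapse the sum over the sub-strata of $\Omega_i$ back into a single sum over $\Omega_i$, yielding $\sum_{k=0}^r\omega_k\sigma_{i,k}=\sum_{G_P\in\Omega_i}\frac{\Pr[G_P]}{\pi'_i}f_s(G_P)^2-\sum_{k=0}^r\omega_k\mu_{i,k}^2$, and therefore $\sigma_i-\sum_{k=0}^r\omega_k\sigma_{i,k}=\sum_{k=0}^r\omega_k\mu_{i,k}^2-\mu_i^2$. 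Since $\mu_i=\sum_{k=0}^r\omega_k\mu_{i,k}$, the right-hand side is $\sum_{k=0}^r\omega_k\mu_{i,k}^2-(\sum_{k=0}^r\omega_k\mu_{i,k})^2$, i.e.\ the variance of $\mu_{i,k}$ viewed as a random variable over $k$ with weights $\omega_k$, which is nonnegative by Jensen's inequality. Summing over $i$ gives $Var(\hat F_{RSSII})\le Var(\hat F_{BSSII})$, and applying the same step at every internal node of the recursion handles the deeper levels.

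The only point that is genuinely new relative to the type-I proof — and the step I expect to need the most care — is checking that the nested type-II stratification inside $\Omega_i$ is still a legitimate partition, in particular for $\Omega_0$ (where all $r$ first-level edges are already forced to ``$0$'') and for $\Omega_i$ with $i\ge1$ (where the forced-``$1$'' edge $e_i$ is already committed). This is exactly what Theorem~\ref{thm:stadesign} delivers once it is applied to the set of edges still marked ``$*$'', so the factorization $\pi'_{i,k}=\pi'_i\omega_k$ and the normalization $\sum_k\omega_k=1$ carry over without change; the remaining algebra is identical to that of Theorem~\ref{thm:rssvar}.
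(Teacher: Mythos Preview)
Your proposal is correct and is exactly the approach the paper intends: the paper does not spell out a separate proof of Theorem~\ref{thm:rssiivar} but simply declares it ``Similar to Theorem~\ref{thm:rssvar}'', and you have carried out precisely that transfer, replacing the $2^r$ strata and weights $\pi_i$ of the type-I argument by the $r{+}1$ strata and weights $\pi'_i$ of Eq.~(\ref{eq:nrsspi}) and invoking Theorem~\ref{thm:stadesign} to justify the nested partition. One small notational point worth tightening when you write it up: the conditional weights $\omega_k$ in the sub-stratification generally depend on $i$ (since in Algorithm~\ref{alg:rssii} stratum $i\ge1$ commits only $e_1,\dots,e_i$ before re-selecting $r$ edges), so strictly they should be $\omega_{i,k}$; this does not affect the argument, as the key inequality $\sum_k\omega_{i,k}\sigma_{i,k}\le\sigma_i$ is established for each $i$ separately.
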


The detail algorithm of the \rssii estimator is described in
Algorithm~\ref{alg:rssii}. Firs, according to an edge-selection
strategy, Algorithm~\ref{alg:rssii} selects $r$ edges from the
unsampled edge-set, which is denoted by $E_2$, to partition the
population into $r+1$ strata (line~9). Note that the random
edge-selection and \bfs edge-selection strategy used in the \rssi
estimator can also be applied in the \rssii estimator. We refer to
the \rssii estimator with random edge-selection and \bfs
edge-selection as the \rssii-{\sl RM} and \rssii-\bfs estimator,
respectively.
Second, according to the sample allocation method, the algorithm
recursively invokes the \rssii algorithm with sample size $N_i$ in
stratum $i$, for $i=1, \cdots, r$ (line 11-23). In line~15 and
line~20, we let $X_i$ be the status vector of the selected edges
under the stratum $i$. Unlike the \rssi estimator, the status vector
of the \rssii estimator is determined by the stratification method
of the \bssii estimator (Table~\ref{tbl:newstatumdesign}). For
example, at the first partition of the \rssii estimator, assume
$T=(e_1, e_2, \cdots, e_r)$ is the set of $r$ edges selected, the
status vector of these selected edges under the stratum $0$ is $X_0
= (0,0,\cdots,0)$. The status vector under the stratum $i$ is $X_i =
(0, \cdots, 0, 1, * \cdots, *)$, where the statuses of the first
$i-1$ edges are ``0'', the status of the $i$-th edge is ``1'', and
the rest $r-i$ edges are ``$*$''. Finally, the algorithm outputs the
\rssii estimator (line~24).

\begin{algorithm}[t]
\caption{\rssii($\mathcal{G}$, $E_1$, $E_2$, $X$, $N$, $s$)}
\label{alg:rssii}
 {
\begin{tabbing}
    {\bf\ Input}: \hspace{0.3cm}\= Influence network $\mathcal{G}$,
            the set of sampled edges $E_1$, \\
         \> the set of unsampled edges $E_2$,  sample size $N$, \\
         \> and the seed node $s$. \\
    {\bf\ Output}: \> The \rssii estimator $\hat F$.
\end{tabbing}
\begin{algorithmic}[1]
\STATE $\hat F \leftarrow 0$; \IF {$N < \tau$ or $|E_2| < r$}
   \FOR {$j = 1$ to $N$}
      \STATE Flip $|E_2|$ coins to generate a possible graph $G_j$;
      \STATE Compute $f_s(G_j)$ by the \bfs algorithm;
      \STATE $\hat F \leftarrow \hat F + f_s(G_j)$;
   \ENDFOR
   \STATE \textbf{return} $\hat F/N$;
\ELSE
   \STATE Select $r$ edges from $E_2$ according to an edge-selection
   strategy (random or \bfs visiting order);
   \STATE  Let $T=(e_1, e_2, \cdots, e_r)$ be the set of selected edges;
   \FOR {$i = 0$ to $r$}
      \STATE Compute $\pi ^\prime _i $ by Eq.~(\ref{eq:nrsspi});
      \STATE $N_i  \leftarrow \left[ {\pi ^\prime _i N} \right]$;
      \IF {$i=0$}
         \STATE Let $X_0$ be the status vector of set $T$ under stratum $0$;
         \STATE Append $X_0$ to $X$;
         \STATE $\mu_i \leftarrow$ \rssii($\mathcal{G}$, $E_1 \cup T$, $E_2
     \backslash T$, $X$, $N_i$, $s$);
      \ELSE
         \STATE Let $T_i \leftarrow \{e_1,\cdots, e_i\}$;
         \STATE Let $X_i$ be the status vector of set $T_i$ under
     stratum $i$;
         \STATE Append $X_i$ to $X$;
         \STATE $\mu_i \leftarrow$ \rssii($\mathcal{G}$, $E_1 \cup T_i$, $E_2
     \backslash T_i$, $X$, $N_i$, $s$);
      \ENDIF
      \STATE $\hat F \leftarrow \hat F + \pi ^\prime _i \mu_i$;
   \ENDFOR
   \STATE \textbf{return} $\hat F$;
\ENDIF
\end{algorithmic}
}
\end{algorithm}

\comment{
\begin{table}[t]\vspace*{-0.5cm}
\begin{center}
{\small
\begin{tabular}{l}
\hline
\textbf{Algorithm 6:} \rssii estimator  \\
\quad \quad \quad \quad \quad \; \; \; $\hat F$ = \rssii($\mathcal{G}$, $E_1$, $E_2$, $X$, $N$, $s$)\\
Input parameter $\mathcal{G}$: The influence network \\
Input parameter $E_1$: Set of edges that have been sampled \\
Input parameter $E_2$: Set of edges that have not been sampled \\
Input parameter $X$: The status vector of the sampled edges \\
Input parameter $N$: The sample size \\
Input parameter $s$: The seed set \\
Output parameter $\hat F$: The \rssii estimator \\
\hline
1: Initialize $\hat F = 0$  \\
2: \textbf{if} $N < \tau$ or $|E_2| < r$ \textbf{then} \\
3: \quad \textbf{for} $j = 1$ to $N$ \textbf{do} \\
4: \quad \quad Flipping $|E_2|$ coins to generate a possible graph $G_j$ \\
5: \quad \quad Computing $f_s(G_j)$ by the \bfs algorithm \\
6: \quad \quad $\hat F$ = $\hat F$ + $f_s(G_j)$ \\
7: \quad \textbf{end for} \\
8: \quad \textbf{return} $\hat F/N$\\
9: \textbf{else} \\
10:\;\quad Selecting $r$ edges from $E_2$ according to a \\
\;\;\;\;\; \quad edge-selection strategy \{random or \bfs visiting order\} \\
11:\;\quad Let $T=(e_1, e_2, \cdots, e_r)$ be the set of selected edges \\
12:\;\quad \textbf{for} $i = 0$ to $r$ \textbf{do} \\
13:\;\quad \quad Computing $\pi ^\prime _i $ by Eq.~(\ref{eq:nrsspi}) \\
14:\;\quad \quad $N_i  = \left[ {\pi ^\prime _i N} \right]$ \\
15:\;\quad \quad \textbf{if} $i=0$ \textbf{then} \\
16:\;\quad \quad \quad Let $X_0$ be the status vector of set $T$ under stratum $0$\\
17:\;\quad \quad \quad Appending $X_0$ to $X$ \\
18:\;\quad \quad \quad $\mu_i$=\rssii($\mathcal{G}$, $E_1 \cup T$, $E_2 \backslash T$, $X$, $N_i$, $s$)\\
19:\;\quad \quad \textbf{else} \\
20:\;\quad \quad \quad Let $T_i=\{e_1,\cdots, e_i\}$\\
21:\;\quad \quad \quad Let $X_i$ be the status vector of set $T_i$ under stratum $i$\\
22:\;\quad \quad \quad Appending $X_i$ to $X$ \\
23:\;\quad \quad \quad $\mu_i$=\rssii($\mathcal{G}$, $E_1 \cup T_i$, $E_2 \backslash T_i$, $X$, $N_i$, $s$)\\
24:\;\quad \quad \textbf{end if}\\
25:\;\quad \quad $\hat F$ = $\hat F$ + $\pi ^\prime _i \mu_i$ \\
26:\;\quad \textbf{end for}\\
27:\;\quad \textbf{return} $\hat F$\\
28:\;\textbf{end if}\\
\hline
\end{tabular}
} \label{alg:rssii}
\end{center}\vspace*{-0.5cm}
\end{table}
}

Like the \rssi estimator, to sample a possible graph, the \rssii
algorithm needs to traverse the recursive tree from the root node to
the terminative node. At all the terminative nodes, the algorithm
needs to sample $N$ possible graphs in total, and for each possible
graph it needs to perform a \bfs to compute the influenceability,
thus the time complexity is $O(Nm)$. At each internal node in a path
from the root node to the terminative node, the time complexity is
$O(r)$. This is because at each internal node the algorithm only
needs to select $r$ edges and determine their statuses which consume
$O(r)$ time complexity. Let $\bar d$ be the average length of such
path and $K$ be the total number of paths. Then, for all the
internal nodes, the algorithm takes $O(K \bar d r)$ time complexity.
According to the terminative condition given in
Algorithm~\ref{alg:rssii}, we can derive that $\bar d = \min \{ \log
_r N,\log _r m)$. Since $r$ can be a big number (eg.\ $r=50$), $\bar
d$ is very small. Thus, the time complexity at the internal nodes
$O(K \bar d r)$ can be dominated by $O(Nm)$. We conclude that the
average time complexity of Algorithm~\ref{alg:rssii} is $O(Nm)$.

\section{Experiments}
\label{sec:experiments}

We conduct experimental studies for different estimators over four
datasets. We confirm the efficiency and accuracy of the proposed
estimators. In the following, we first describe the experimental
setup, and then report our results.

\subsection{Experimental setup}
\label{subsec:expsetup}

\stitle{Datasets}: We use one synthetic dataset and three real
datasets in our experiments. We apply the same parameters used in
\cite{11vldbreacheuncertaing} to generate the synthetic dataset. For
the graph topology, we generate an Erdos-Renyi (ER) random graph
with 5,000 vertices and edge density 10. For the influence
probabilities, we generate a probability for each edge according to
a [0,1] uniform distribution.

The three real datasets are given as follows. (1) FacebookLike
dataset: this dataset originates from a Facebook social network for
students at University of California, Irvine. It contains the users
who sent or received at least one message. We collect this dataset
from (\url{toreopsahl.com/datasets}). The dataset is a weighted
graph, and the weight of each edge denotes the number of messages
passing over the edge. (2) Condmat dataset: this dataset is a
weighted collaboration network, where the weight of an edge
represents the number of co-authored papers between two
collaborators. We download this dataset from
(\url{www-personal.umich.edu/~mejn/netdata}). (3) DBLP dataset: this
dataset is also a weighted collaboration network, where the weight
of the edge signifies the number of co-authored papers. This dataset
is provided by the authors in \cite{10icdmdataset}. Table
\ref{tbl:data} summarizes the information for the four real
datasets. To obtain the influence networks, for each real dataset,
we generate the influence probabilities according to the same method
used in \cite{10vldbuncertaingraph,11vldbreacheuncertaing}.
Specifically, to generate the probability of an edge, we apply an
exponential cumulative distribution function (CDF) with mean 2 to
the weight of the edge.

\begin{table}[t]\tbl{Summary of the datasets
\label{tbl:data}}{
\begin{tabular}{|l|r|r|l|} \hline
{\bf Name} & {\bf Nodes} & {\bf Edges} & {\bf Ref.} \\
\hline
Random graph & 5,000 & 50,616 & \cite{11vldbreacheuncertaing}\\
FacebookLike & 1,899 & 20,296 & \cite{09weightednet} \\
Condmat & 16,264 & 95,188 & \cite{01newmandataset} \\
DBLP & 78,648 & 376,515 & \cite{10icdmdataset}\\
\hline
\end{tabular}}
\end{table}

\stitle{Different estimators}: In our experiments, we compare 10
estimators.
(1) The \nmc estimator, which is the Naive Monte-Carlo estimator.
(2) \rssi-{\sl RM} ($r=1$), which is a special \rssi-{\sl RM}
estimator where the parameter $r=1$, based on work presented in
\cite{11vldbreacheuncertaing} for computing distance-constraint
reachability on uncertain graph. We also generalize their estimator
to arbitrary parameter $r$, and apply the generalized estimator for
influenceability evaluation.
Recall that beyond the random edge-selection strategy, we propose a
more accurate \rssi estimator with \bfs edge-selection strategy.
(3) \bssi-{\sl RM}, which is the \bssi estimator with the random
edge-selection. (4) \bssi-\bfs, which is the \bssi estimator with
the \bfs edge-selection. (5) \rssi-{\sl RM}, which is the \rssi
estimator with the random edge-selection. (6) \rssi-\bfs, which is
the \rssi estimator with the \bfs edge-selection. (7) \bssii-{\sl
RM}, which is the \bssii estimator with the random edge-selection.
(8) \bssii-\bfs, which is the \bssii estimator with the \bfs
edge-selection. (9) \rssii-{\sl RM}, which is the \rssii estimator
with the random edge-selection. (10) \rssii-\bfs, which is the
\rssii estimator with the \bfs edge-selection.

\stitle{Evaluation metric}: Two metrics are used to evaluate the
performance of the estimators: running time and relative variance.
The running time evaluates the efficiency of the estimators. The
relative variance is leveraged to evaluate the accuracy of the
estimators. Let $\sigma_{NMC}$ be the variance of the \nmc
estimator. We calculate the relative variance of an estimator $\hat
F$ by $\sigma_{\hat F}/\sigma_{NMC}$. Since computing the exact
variance of the estimators is intractable, we resort to an unbiased
estimator of the variance. Similar evaluation metric has been used
in \cite{11vldbreacheuncertaing}. Specifically, for a given seed
node $s$ in our experiments, we run all the estimators $\hat
{F}_s(\mathcal{G})$ $500$ times, thereby we can obtain $500$
estimating results: $\hat {F}^{(1)}_s(\mathcal{G}), \hat
{F}^{(2)}_s(\mathcal{G}), \cdots, \hat {F}^{(500)}_s(\mathcal{G})$.
An unbiased variance estimator of $\hat {F}_s(\mathcal{G})$ is given
by
\[\sum\nolimits_{i = 1}^{500} {(\hat {F}^{(i)}_s(\mathcal{G}) - \bar F_s(\mathcal{G}))^2 }
/499,
\]
where $\bar F_s(\mathcal{G}))$ denotes the mean of the $500$ various
estimating results.

\stitle{Parameter settings and the experimental environment}:
Without specifically stated, in all of our experiments, we set the
parameters as follows. For all estimators, we set the sample size
$N=1,000$. For the \bssi and \rssi estimators, we set $r=5$, and for
the \bssii and \rssii estimators, we set $r=50$. For the threshold
parameter $\tau$ in Algorithm~\ref{alg:rssi} and
Algorithm~\ref{alg:rssii}, we set $\tau=10$. All the experiments are
conducted on the Scientific Linux 6.0 workstation with 2xQuad-Core
Intel(R) 2.66 GHz CPU, and 4G memory. All algorithms are implemented
by GCC 4.4.4.

\subsection{Experimental Results}
\label{subsec:expresults}

For all the experiments, we randomly generate 1,000 seed nodes, and
the results are the average result over all the seeds. We report our
experimental results on random graph, FacebookLike, Condmat, and
DBLP dataset in Table~\ref{tbl:rvrg},
Table~\ref{tbl:rvfacebooklike}, Table~\ref{tbl:rvcondmat}, and
Table~\ref{tbl:rvdblp}, respectively.

From Table~\ref{tbl:rvrg}, among all the estimators, we can observe
that the \rssi-\bfs is the winner on the random graph dataset, the
\rssi-{\sl RM}, \rssii-{\sl RM}, and \rssii-\bfs estimators are
significantly better than the \rssi-{\sl RM} ($r=1$) estimator. The
specific \rssi-{\sl RM} ($r=1$) estimator outperforms the \bss
estimators, and all the \bss estimators are better than the \nmc
estimator. In particular, \rssi-\bfs reduces the relative variance
over the \nmc and \rssi-{\sl RM} ($r=1$) estimators by 386\% and
227\%, respectively.  \rssii-\bfs cuts the relative variance over
\nmc and \rssi-{\sl RM} ($r=1$) by 385\% and 226\%, respectively.
Both \rssi-{\sl RM} and \rssii-{\sl RM} estimators cut the relative
variance over the \nmc and the \rssi-{\sl RM} ($r=1$) estimators
more than 185\% and 91.4\%, respectively. For the \bss estimators,
their performance is worse than the \rssi-{\sl RM} ($r=1$)
estimator, but are significantly better than the \nmc estimator.
In addition, the running time of all the estimators are comparable.
These results consist with our analysis in
Section~\ref{sec:newestimatortypeI} and
Section~\ref{sec:newestimatortypeII}.

From Table~\ref{tbl:rvfacebooklike}, we can see that \rssii-\bfs
achieves the best relative variance on the FacebookLike dataset,
followed by \rssi-\bfs, \rssii-{\sl RM}, \rssi-{\sl RM}, \rssi-{\sl
RM} ($r=1$), the \bss estimators, and the \nmc estimator.  More
specifically, the \rssii-\bfs estimator reduces the relative
variance over the \nmc estimator and the \rssi-{\sl RM} ($r=1$)
estimators by 317\% and 133\%, respectively. The \rssi-\bfs
estimator reduces the relative variance over \nmc and \rssi-{\sl RM}
($r=1$) by 289\% and 117\%. Both \rssi-{\sl RM} and \rssii-{\sl RM}
estimators cut the relative variance over \nmc and \rssi-{\sl RM}
($r=1$) more than 231\% and 184\%, respectively. Similar to the
result on the random graph dataset, all the \bss estimators are
slightly worse than the \rssi-{\sl RM} ($r=1$) estimator but are
significantly better than the \nmc estimator.
Also, the running time of all the estimators are comparable because
the time complexities of all the estimators are $O(Nm)$. These
results confirm our analysis in the previous sections. Similar
results can be observed in the Condmat (Table~\ref{tbl:rvcondmat})
and DBLP datasets (Table~\ref{tbl:rvdblp}).

\begin{table}[t]\tbl{Results on random graph dataset
\label{tbl:rvrg}}{
\begin{tabular}{|l|c|c|}
\hline
{\bf Estimators} & {\bf Relative variance} & {\bf Running time (s)} \\
\hline
\nmc & 1.0000 &   0.3593 \\
\rssi-{\sl RM} ($r=1$) & 0.6723 & 0.3558 \\
\bssi-{\sl RM} & 0.9429 &  0.3497 \\
\bssi-\bfs & 0.8938 &  0.3748 \\
\rssi-{\sl RM} & 0.3397 &  0.3373 \\
\rssi-\bfs & \textbf{0.2056} &  0.3783 \\
\bssii-{\sl RM} & 0.9321 & 0.3633 \\
\bssii-\bfs & 0.9042&  0.3749 \\
\rssii-{\sl RM} & 0.3512 &  0.3716 \\
\rssii-\bfs & 0.2063 & 0.3847 \\
\hline
\end{tabular}}
\end{table}

\begin{table}[t]\tbl{Results on FacebookLike dataset
\label{tbl:rvfacebooklike}}{
\begin{tabular}{|l|c|c|c|}
\hline
{\bf Estimators} & {\bf Relative variance} & {\bf Running time (s)} \\
\hline
\nmc & 1.0000 & 0.2007  \\
\rssi-{\sl RM} ($r=1$) & 0.5585 &  0.2014 \\
\bssi-{\sl RM} & 0.8898 &  0.2331\\
\bssi-\bfs & 0.6819 &  0.2354\\
\rssi-{\sl RM} & 0.3023 &   0.2002\\
\rssi-\bfs & 0.2570 &  0.2010\\
\bssii-{\sl RM} & 0.6947 & 0.2250 \\
\bssii-\bfs & 0.6672 & 0.2284 \\
\rssii-{\sl RM} & 0.2786 &  0.2027\\
\rssii-\bfs & \textbf{0.2397} &  0.2037\\
\hline
\end{tabular}}
\end{table}

\begin{table}[t]\tbl{Results on Condmat dataset \label{tbl:rvcondmat}}{
\begin{tabular}{|l|c|c|}
\hline
{\bf Estimators} & {\bf Relative variance} & {\bf Running time (s)} \\
\hline
\nmc & 1.0000 &  1.2969  \\
\rssi-{\sl RM} ($r=1$) & 0.7950 & 1.2958 \\
\bssi-{\sl RM} & 0.9068 &  1.3043\\
\bssi-\bfs & 0.8531 &  1.3054\\
\rssi-{\sl RM} &  0.4883 &  1.2050 \\
\rssi-\bfs & \textbf{0.1971} & 1.2411 \\
\bssii-{\sl RM} &  0.8553 & 1.2513\\
\bssii-\bfs & 0.8421 & 1.3104 \\
\rssii-{\sl RM} & 0.4891 &  1.2256\\
\rssii-\bfs &  0.2120 &  1.2284\\
\hline
\end{tabular}}
\end{table}

\begin{table}[t] \tbl{Results on DBLP dataset \label{tbl:rvdblp}}{
\begin{tabular}{|l|c|c|}
\hline
{\bf Estimators} & {\bf Relative variance} & {\bf Running time (s)} \\
\hline
\nmc & 1.0000 &   8.5824 \\
\rssi-{\sl RM} ($r=1$) & 0.5375 & 8.6536 \\
\bssi-{\sl RM} & 0.9170 &  8.6292 \\
\bssi-\bfs & 0.8373 &  8.8173 \\
\rssi-{\sl RM} & 0.2100 &  8.3835 \\
\rssi-\bfs & 0.1918 &  8.5933 \\
\bssii-{\sl RM} & 0.9449 & 8.8825 \\
\bssii-\bfs & 0.7997&  9.1305 \\
\rssii-{\sl RM} &  0.2003 &  8.6840 \\
\rssii-\bfs & \textbf{0.1821} & 8.7052 \\
\hline
\end{tabular}}
\end{table}

To summarize, \rssi-\bfs and \rssii-\bfs achieve the best relative
variance, and they reduce the relative variance over the existing
estimators several times. The \rss estimators are better than the
\bss estimators. The \bss/\rss estimators with the \bfs
edge-selection strategy are better than the \bss/\rss estimators
with the random edge-selection strategy. All of our \rss estimators
outperform the \rssi-{\sl RM} ($r=1$) estimator. The proposed \bss
estimators are slightly worse than the \rssi-{\sl RM} ($r=1$)
estimator, but still significantly outperform the \nmc estimator.
The running time of all the estimators are comparable.

\stitle{Scalability}: In order to study the scalability of various
estimators, we generate synthetic probabilistic graphs $\mathcal{G}$
with nodes ranging from 200,000 (200k) to 800,000 and the edges
ranging from 800,000 to 3,200,000 (3.2m) according to the ER random
graph model. And the probability of each edge is randomly generated
according to a [0, 1] uniform distribution. Also, for each
estimator, we set the sample size $N$ to 1,000.
Table~\ref{tbl:scalability} shows the running time of different
estimators on four large synthetic probabilistic graphs. As can be
seen in Table~\ref{tbl:scalability}, the running time increases as
the size of the graph increases. In general, all the estimators
achieve comparable running time, and they have linear growth w.r.t.
the graph size. These results consist with the complexities of our
estimators, i.e.\ $O(Nm)$.

\begin{table}[t] \tbl{Scalability: Running time on synthetic graphs. Here the two numbers
in the 2nd-5th columns (eg.\ 200k/800k) indicate the numbers of
nodes and edges respectively \label{tbl:scalability}}{
\begin{tabular}{|l|c|c|c|c|}
\hline
{\bf Time (s)} & 200k/800k & 400k/1.3m & 600k/1.6m & 800k/3.2m \\
\hline
\nmc & 26.0820 & 156.9600 & 289.7720 & 365.0280 \\
\bssi-{\sl RM} & 25.2090 & 159.1990 & 281.6810 & 343.0350 \\
\bssi-\bfs & 27.2120 & 169.6120 & 286.2180 & 368.0910 \\
\rssi-{\sl RM} & 23.3430 & 143.6700 & 264.9790 & 342.3920 \\
\rssi-\bfs & 25.2090 & 169.6120 & 286.2180 & 344.0180 \\
\bssii-{\sl RM} & 26.1450 & 161.4100 & 287.1500 &  371.4770\\
\bssii-\bfs & 29.5760 & 162.3930 & 290.9340 & 374.6830\\
\rssii-{\sl RM} &  26.4440 & 156.8120 & 270.6670 & 363.1590 \\
\rssii-\bfs & 26.4990 & 162.7940 & 271.1370 & 365.9630 \\
\hline
\end{tabular}}
\end{table}

\stitle{Effect of parameter $r$}: We study the effectiveness of the
parameter $r$ in our proposed estimators on Condmat dataset. Similar
results can be observed from other datasets. Fig.~\ref{fig:reffectI}
and Fig.~\ref{fig:reffectII} show the relative variance of our
type-I and type-II estimators w.r.t.\ various $r$. As can be seen in
Fig.~\ref{fig:reffectI}, the \bssi estimators exhibit similar
relative variance over different $r$ values. However, the relative
variance of the \rssi-{\sl RM} estimator decreases as the $r$
increases when $r \le 5$, and otherwise it increases as the $r$
increases. For the \rssi-\bfs estimator, the relative variance
decreases as $r$ increases, and when $r \ge 5$ the descent rate is
very small, and the curve tends to be smooth. Based on this
observation, $r=5$ is the best choice, which is used in the previous
experiments.
For the type-II estimators, we test the parameter $r$ from 10 to 70,
and the results (Fig.~\ref{fig:reffectII}) show that all of our
type-II estimators except \rssi-\bfs are not very sensitive w.r.t.\
the parameter $r$. As an exception, the relative variance of the
\rssi-\bfs estimator decreases as the $r$ increases when $r \le 50$,
and when $r \ge 50$ the the curve tends to be smooth. Therefore,
$r=50$ is a good choice. In our previous experiments, we set $r$ to
50. Table~\ref{tbl:rtimerssI} and Table~\ref{tbl:rtimerssII} report
the running time of type-I estimators and type-II estimators under
different $r$ values.  We can see that the running time of both
type-I estimators and type-II estimators are comparable.

\begin{figure}[t] 
\begin{center}
\includegraphics[width=0.8\hsize]{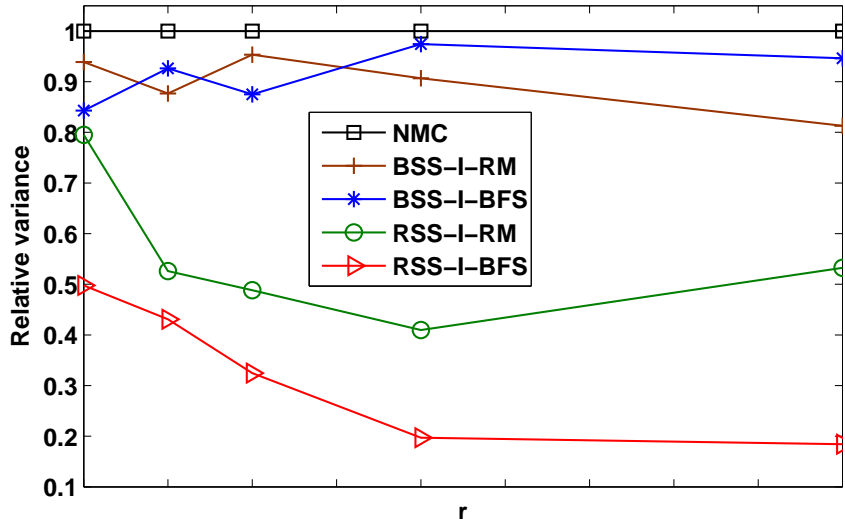}
\vspace*{-1em} \caption{Effect of $r$ of \bssi/\rssi estimators.}
\label{fig:reffectI}
\end{center}
\end{figure}

\begin{figure}[t] 
\begin{center}
\includegraphics[width=0.8\hsize]{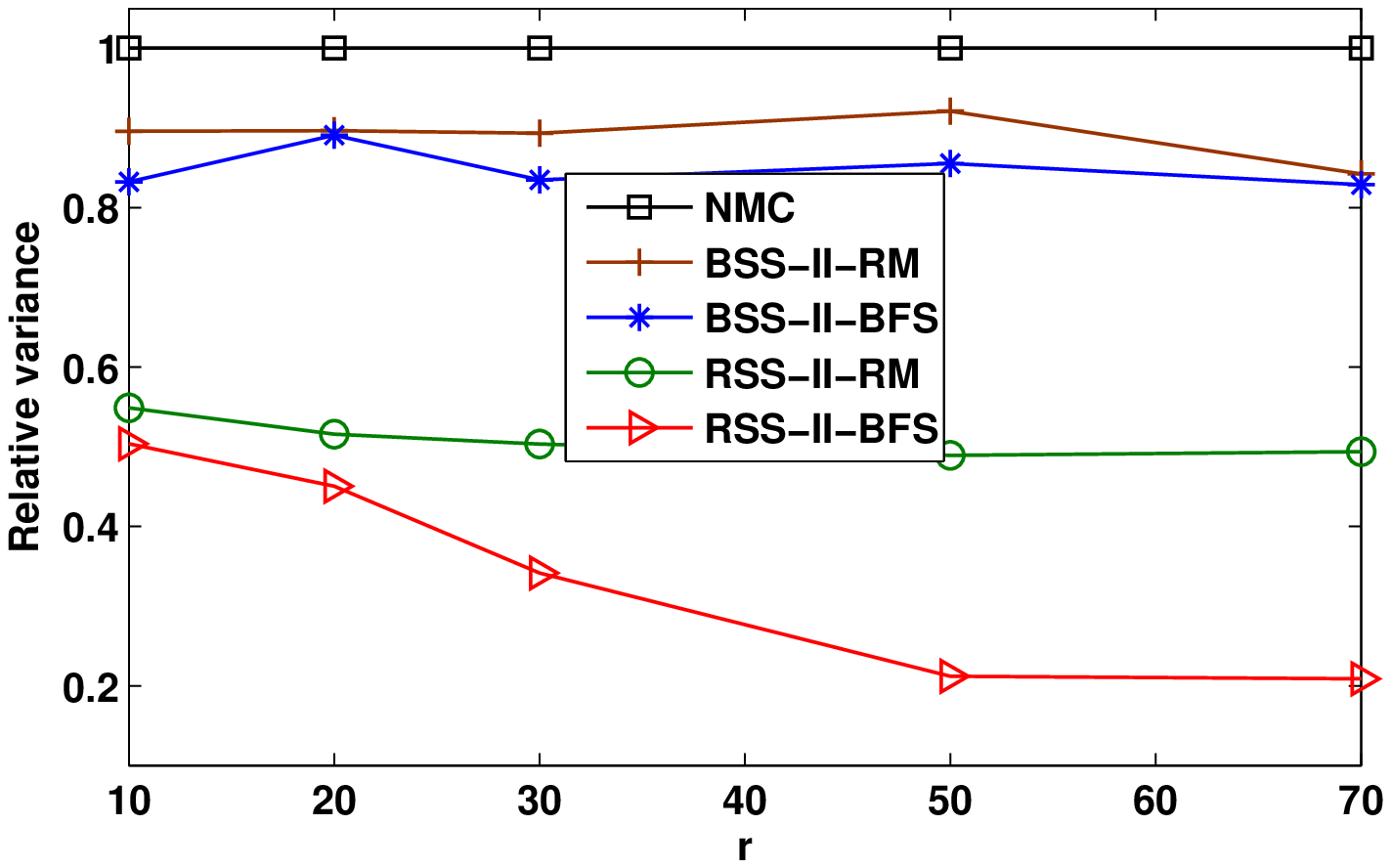}
\vspace*{-0.8em} \caption{Effect of $r$ of \bssii/\rssii
estimators.} \label{fig:reffectII}
\end{center}
\end{figure}

\begin{table}[t]\tbl{\bssi/\rssi estimators: Running time vs $r$
\label{tbl:rtimerssI}}{
\begin{tabular}{|l|c|c|c|c|c|}
\hline
{\bf Time (s)} & $r=1$ & $r=2$ & $r=3$ & $r=5$ & $r=10$ \\
\hline
\bssi-{\sl RM} & 1.2642 & 1.2705 & 1.2770 & 1.3043 & 1.2986 \\
\bssi-\bfs & 1.2731 & 1.2791 & 1.2798 & 1.3054 & 1.3686 \\
\rssi-{\sl RM} & 1.2082 & 1.1993 & 1.1810 & 1.2050 & 1.1172 \\
\rssi-\bfs & 1.2158 & 1.2140 & 1.2189 & 1.2411 & 1.1833\\
\hline
\end{tabular}}
\end{table}

\begin{table}[t]\tbl{\bssii/\rssii estimators: Running time vs $r$
\label{tbl:rtimerssII}}{
\begin{tabular}{|l|c|c|c|c|c|}
\hline
{\bf Time (s)} & $r=10$ & $r=20$ & $r=30$ & $r=50$ & $r=70$ \\
\hline
\bssii-{\sl RM} & 1.2515 & 1.2502 & 1.2511 & 1.2513 & 1.2524 \\
\bssii-\bfs & 1.2579 & 1.2719 & 1.2836 & 1.3104 & 1.3447 \\
\rssii-{\sl RM} & 1.2279 & 1.2246 & 1.2162 & 1.2256 & 1.2092 \\
\rssii-\bfs & 1.2358 & 1.2278 & 1.2258 & 1.2284 & 1.2295 \\
\hline
\end{tabular}}
\end{table}

\stitle{Effect of sample size}: As shown in the previous
experiments, the \rssi-\bfs and the \rssii-\bfs estimators are the
best two estimators. Here we study how sample size affects the
estimating accuracy of these two estimators on the Condmat dataset.
Similar results can be observed on the other dataset.
Fig.~\ref{fig:samplesize} shows the relative variance of the
estimators under various sample size. As can be observed in
Fig.~\ref{fig:samplesize}, the curves of \rssi-\bfs and \rssii-\bfs
estimators are very smooth, which indicate that the relative
variance of both \rssi-\bfs and \rssii-\bfs estimators are robust
w.r.t.\ the sample size.

\begin{figure}[t] 
\begin{center}
\includegraphics[width=0.8\hsize]{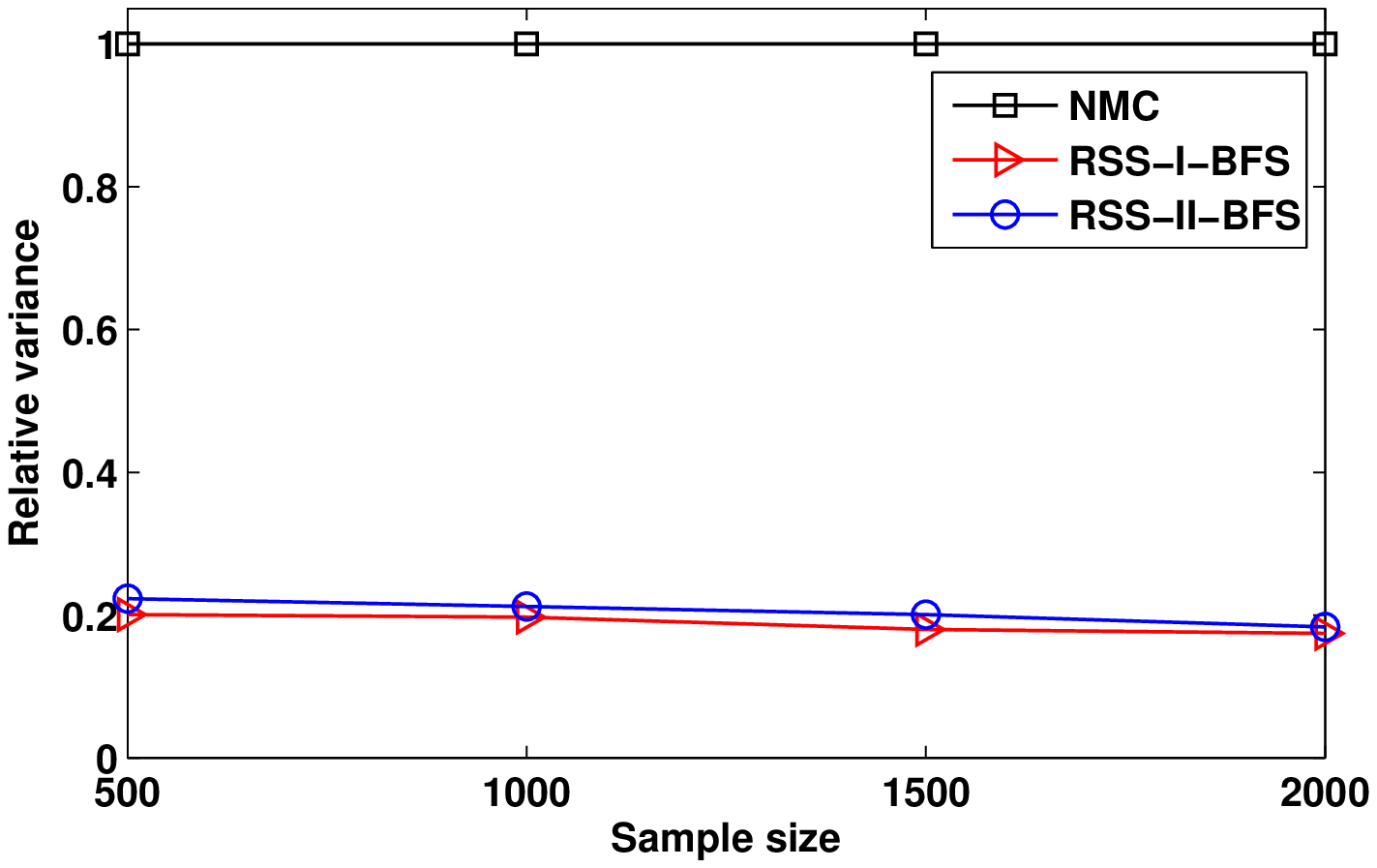}
\vspace*{-0.8em} \caption{Relative variance vs sample size.}
\label{fig:samplesize}
\end{center}
\end{figure}

%

%


\section{Related work}
\label{sec:rlwork}

After the seminal work by Kempe, et al. \cite{03kddinfluence},
influence maximization in social networks has recently attracted
much attention in data mining and social network analysis research
communities
\cite{05icalpinfluence,07kddoutbreak,09kddinfluence,10kddinfluence,10wsdmlearninginfluence,11sdminfluence,12vldbdatadriveninfluence}.
A crucial subroutine in influence maximization is the influence
function evaluation to which the influenceability estimation problem
presented in this paper is closely related.
In the following, we first review some notable work on influence
maximization problem, and then discuss the existing work on
influence function evaluation. In \cite{07kddoutbreak}, the authors
study the influence maximization problem under the context of water
distribution and blogosphere monitoring. They propose a so-called
CELF framework for optimizing the influence maximization algorithms.
To further accelerate the influence maximization algorithms, Chen,
et al. in \cite{09kddinfluence} propose a scalable algorithm by
sampling $N$ possible graphs and estimating the influence spread of
all vertices on each possible graph at one time. Subsequently, the
same authors propose a series of scalable algorithms in
\cite{10kddinfluence} and \cite{11sdminfluence} for influence
maximization by developing the heuristic vertices-selection
strategies on unsigned and signed networks, respectively. Recently,
Goyal, et al. in
\cite{10wsdmlearninginfluence,12vldbdatadriveninfluence} consider
the problem of learning the influence probabilities, and study the
influence maximization from a data-driven perspective. Note that all
the mentioned methods focus on the influence maximization problem.
For the influence function evaluation problem, Kempe, et al. firstly
pose it as an open problem in \cite{05icalpinfluence}. Then, Chen,
et al. in \cite{10kddinfluence} show that the influence function
evaluation problem is \#P complete. Given the hardness of the
problem, most of the existing work for this problem, such as
\cite{03kddinfluence,07kddoutbreak,09kddinfluence,10kddinfluence},
are based on the Naive Monte-Carlo (\nmc) sampling. In this paper,
we study the influenceability evaluation problem and develop more
accurate \rss estimators for estimating the influenceability, and
our algorithms can also be used for influence function evaluation.

Our work is also related to the uncertain graph mining. Recently,
uncertain graphs mining have been attracted increased interest
because of the increasing applications in biological database
\cite{06dilsuncertaingraph}, network routing
\cite{07infocommuncertainnet}, and influence networks
\cite{12vldbdatadriveninfluence}. There are a large body of works
have been proposed in the literature. Notable work includes finding
the reliable subgraph in a large uncertain graph
\cite{08dmkdprobgraph,11kddreliablesugg}, frequent subgraph mining
in uncertain graph database
\cite{10kdduncertaingraph,10tkdeuncertaingraph}, subgraph search in
large uncertain graph \cite{11vldbuncertaingraph}, K-nearest
neighbor search in uncertain graph \cite{10vldbuncertaingraph}, and
distance constraint reachability computation in uncertain graph
\cite{11vldbreacheuncertaing}. In general, all the mentioned
uncertain graph mining problems are shown to be \#P-complete, and
thereby finding the exact solution is intractable in large uncertain
graphs. Consequently, most existing work, such as
\cite{10vldbuncertaingraph} and \cite{11kddreliablesugg}, are based
on \nmc sampling. Basically, the \nmc sampling based methods lead to
a large variance, thus reduce the performance of the algorithms.
Recently, Jin, et al. in \cite{11vldbreacheuncertaing} propose a
recursive stratified sampling method for distance-constraint
reachability computation on uncertain graph, although they do not
claim their method is a stratified sampling. It is important to note
that their method is a very special case of our \rssi algorithm. In
their method, they select only one edge for stratification at a
time, and then recursively perform this procedure. Unlike their
algorithm, first, we develop a generalized algorithm (\rssi) that
selects $r$ edges for stratification. Second, unlike their
reachability problem, here we study the influenceability evaluation
problem using the \rssi sampling. Moreover, in our work, we also
develop another \rss estimator, i.e. \rssii estimator. Note that all
of our \rss estimators can also be applied into the
distance-constraint reachability computation problem.

In addition, our work is related to the network reliability
estimation problem, where a network is modeled as an uncertain graph
and the goal is to estimate some reliability metrics of the network
\cite{86trmccompare,99bookchnetreliability}. There are many work on
this topic in the last five decades. Surveys can be found in
\cite{87booknetreliable,99bookchnetreliability}.

Below, we review the Monte-Carlo algorithms for network reliability
estimation. Kumamoto, et al. \cite{77trboundreliability} propose an
efficient Monte-Carlo algorithm by exploiting the bound of the
reliability metric. Fishman \cite{86ormcbounds} proposes a more
generalized Monte-Carlo algorithm based on such bound techniques for
reliability estimation. Subsequently, Fishman \cite{86trmccompare}
compares four Monte-Carlo algorithms for network reliability
estimation problem. Cancela, et al. in \cite{03trrvrnetreliable}
propose a recursive variance-reduction algorithm for network
reliability estimation. Note that all the mentioned Monte-Carlo
algorithms are tailored for the network reliability estimation
problem, and the reliability measure is typically a Boolean metric
thus they cannot be used in our problem.

\section{Conclusions}
\label{sec:concl}

In this paper, we focus on the influenceability evaluation problem,
which is a fundamental issue for influence analysis in social
network. This problem is known to be \#P-complete, and the only
existing algorithm is based on the Naive Monte-Carlo (\nmc)
sampling. To reduce the variance of the \nmc estimator, we propose
two basic stratified sampling (\bss) estimators. Furthermore, based
on our \bss estimators, we present two recursive stratified sampling
(\rss) estimators. We conduct comprehensive experiments on one
synthetic and three real datasets, and the results confirm that our
\rss estimators reduce the variance of the \nmc estimator by several
times. There are several future directions that deserve further
investigation. First, most of our estimators except the \rss
estimators with BFS edge selection do not take the graph structural
information into account. In our experiments, the \rss estimators
with BFS edge selection are shown much better performance than the
\rss estimators with random edge selection. A promising direction is
to exploit the graph structural information to develop more
efficient and more accurate estimators for influenceability
evaluation. Second, our estimation techniques are quite general. For
many uncertain graph mining problems, such as shortest path
\cite{10vldbuncertaingraph}, reachability
\cite{11vldbreacheuncertaing}, and reliable subgraph discovery
\cite{11kddreliablesugg}, our estimators can be directly used. For
these problems, we only need to replace the $\phi_s(G_P)$ to other
quantities, such as the length of the shortest path, the
reachability function between two nodes, and the reliable subgraph
metric. Most of these uncertain graph mining problems are based on
\nmc. Another promising future direction is to apply our estimation
techniques to these problems.



%

\bibliographystyle{acmsmall}
\bibliography{tkddinfluence}

%
%
%
%
%
%
%

\end{document}